\tikzset{>=latex}
\definecolor{arrowblue}{RGB}{98,145,224}
\tikzset{circarrow/.style={
        *->,
        shorten <=-2pt
    }
}
\tikzstyle{place}=[circle,draw=blue!50,fill=blue!20,thick]
\tikzstyle{placespec}=[circle,draw=blue!50,fill=blue!20,thick, inner sep=0pt,minimum size=6.5mm]
\tikzstyle{place1}=[circle,draw=blue!50,fill=blue!20,thick,inner sep=0pt,minimum size=2mm]
\tikzstyle{place11}=[circle,draw=blue!50,fill=blue!20,thick, inner sep=0pt,minimum size=0mm]
\tikzstyle{placetab}=[circle,draw=blue!50,fill=blue!20,thick, inner sep=0pt,minimum size=2mm]
\tikzstyle{transition}=[rectangle,draw=blue!50,fill=blue!20,thick]
\tikzstyle{vecArrow} = [thick, decoration={markings,mark=at position
\tikzset{
BasicNode/.style={circle, draw= black!50, fill=colora!40, thin, minimum size
  = 5mm, inner sep = 0mm},
SusCirc/.style={BasicNode,fill=colorS!40},
SusCircBig/.style={BasicNode,fill=colorS!40, minimum size=10mm},
InfCirc/.style={BasicNode,fill=colorI!40},
InfCircBig/.style={BasicNode,fill=colorI!40,minimum size=10mm},
RecCirc/.style={BasicNode,fill=colorR!40},
RecCircBig/.style={BasicNode,fill=colorR!40, minimum size=10mm},
DefCirc/.style={BasicNode}
}
\tikzset{
LargeDot/.style = {circle, draw = black, fill = black, thick, minimum
  size = 1mm, inner sep = 0mm}
}
\tikzstyle{BigCirc}=[circle,draw=black!50,thick,inner sep=0pt,minimum size=0pt]
\tikzstyle{default} = [draw, minimum size = 3em, text width = 4em, text centered]
\tikzstyle{wide}=[draw, minimum size=3em, text width=7.5em, text
\tikzstyle{shortbox}=[draw, minimum size=2em, text width=4.5em, text centered]
\tikzstyle{bigbox}=[draw, inner sep=20pt,label={[align=right,shift={(-1.5ex,3ex)}]south east:\llap{#1}}]
\tikzstyle{box} = [draw, minimum size=2em, text centered]
\tikzstyle{decay} = [draw, ->, decorate, decoration = {snake,
\tikzstyle{ClearCirc}=[circle,draw=black!50,fill=white!20,thick, inner sep=1pt]
\definecolor{colora}{RGB}{0,115,179}
\definecolor{colorb}{RGB}{230,154,0} 
\definecolor{colorc}{RGB}{0,154,128} 
\definecolor{colord}{RGB}{205,10,179}
\definecolor{colore}{RGB}{255,32,0}
\definecolor{colorf}{RGB}{240,228,66}
\definecolor{colorg}{RGB}{90,179,230}
\definecolor{colorh}{RGB}{205,154,179}
\definecolor{colorS}{RGB}{0,154,128}
\definecolor{colorI}{RGB}{255,32,0}
\definecolor{colorR}{RGB}{205,10,179}
\definecolor{colorE}{RGB}{240,228,66} 
\definecolor{peach}{HTML}{E69F00}
\definecolor{blue}{HTML}{56B4E9}
\definecolor{green}{HTML}{009E73}
\definecolor{yellow}{HTML}{F0E442}
\definecolor{blue2}{HTML}{0072B2}
\definecolor{grey}{HTML}{878787}
\definecolor{pink}{HTML}{CC79A7}
\newcommand{\be}{\begin{equation}}
\newcommand{\ee}{\end{equation}}
\newcommand{\bea}{\begin{eqnarray}}
\newcommand{\eea}{\end{eqnarray}}
\newcommand{\nod}{\noindent}
\newcommand{\ba}{\begin{array}}
\newcommand{\ea}{\end{array}}
\newcommand{\bc}{\begin{center}}
\newcommand{\ec}{\end{center}}
\newtheorem{prop}{Proposition}
\definecolor{nicos-red}{rgb}{0.75,0.0,0.0}
\definecolor{light-gray}{gray}{0.6}
\definecolor{really-light-gray}{gray}{0.8}
\definecolor{sussexg}{rgb}{0.0,0.5,0.5}
\definecolor{sussexp}{rgb}{0.5,0.0,0.5}
\numberwithin{equation}{section}
\theoremstyle{remark}
\definecolor{darkgreen}{rgb}{0.0,0.5,0.0}
\definecolor{darkblue}{rgb}{0.0,0.0,0.3}
\definecolor{nicosred}{rgb}{0.65,0.1,0.1}
\definecolor{light-gray}{gray}{0.7}
\begin{document}


\begin{center}
\large{\textbf{Epidemic threshold in pairwise models for clustered networks: closures and fast correlations}}\\
\vspace{0.5cm}
\textit{Rosanna C. Barnard $^{1}$, Luc Berthouze $^2$, P\'eter L. Simon  $^3$ \& Istv\'an Z. Kiss $^{1,*}$}
\end{center}
\vspace{0.0cm}
\begin{center}
$^1$ School of Mathematical and Physical Sciences, Department of
Mathematics, University of Sussex, Falmer,
Brighton BN1 9QH, UK\\
$^2$ Centre for Computational Neuroscience and Robotics, University of Sussex,
Falmer, Brighton, BN1 9QH, UK\\
$^3$ Institute of Mathematics, E\"otv\"os Lor\'and University Budapest, and
Numerical Analysis and\\
Large Networks Research Group, Hungarian Academy of Sciences, Hungary
\end{center}
\vspace{2.5cm}

\begin{center}
\textbf{Abstract}\\
\end{center}
The epidemic threshold is probably the most studied quantity in the modelling of epidemics on networks. For a large class of networks and dynamics the epidemic threshold is well studied and understood. However, it is less so for clustered networks where theoretical results are mostly limited to idealised networks. In this paper we focus on a class of models known as pairwise models where, to our knowledge, no analytical result for the epidemic threshold exists. We show that by exploiting the presence of fast variables and using some standard techniques from perturbation theory we are able to obtain the epidemic threshold analytically. 
We validate this new threshold by comparing it to the numerical solution of the full system. The agreement is found to be excellent over a wide range of values of the clustering coefficient, transmission rate and average degree of the network. Interestingly, we find that the analytical form of $R_0$ depends on the choice of closure, highlighting the importance of model choice when dealing with real-world epidemics. Nevertheless, we expect that our method will extend to other systems in which fast variables are present. \\

\nod {\bf Keywords:} network, epidemic, pairwise model, clustering, correlation, fast variables\\

\vspace{0.0cm}
\noindent $^{*}$ corresponding author: i.z.kiss@sussex.ac.uk
\newpage
\section{Introduction} \label{Intro}
Epidemic dynamics on networks, being susceptible-infected-susceptible (SIS), susceptible-infected-recovered/removed (SIR) or otherwise, are often modelled as continuous time Markov chains with discrete but extremely large state spaces of order $m^{N}$, where $m$ is the number of different disease statuses (e.g. $m=2$ for SIS and $m=3$ for SIR) and $N$ is the number of nodes/individuals in the network. This makes the analysis of the resulting exact system almost impossible, except for some specific network topologies such as the fully connected network, networks with considerable structural symmetry or networks with few nodes \cite{kiss2017mathematics,holme2017three} that allow for simplification.

This problem, instead, has been dealt with by focusing on mean-field models where the goal is to derive, often heuristically, a system of ordinary or integro-differential equations that describe (non-Markovian) epidemics for some average quantities, such as the expected number of nodes in various states, the expected number of links in various states or the expected number of star-like structures (focusing on a node and all of its neighbours). These methods usually rely on closures to break the dependency on higher-order moments (e.g. the expected number of nodes in a state depends on the expected number of links in certain states and so on). Such an approach has led to a number of models including heterogeneous or degree-based mean-field~\cite{pastor2001epidemic,pastor2015epidemic}, pairwise~\cite{rand1999correlation,keeling1999effects}, effective-degree~\cite{lindquist2011effective}, edge-based compartmental~\cite{miller2012edge} and message passing~\cite{karrer2010message}, to name a few. These models essentially differ in the choice of variables over which the averaging is done. Perhaps the most compact model with the fewest number of equations is the edge-based compartmental model~\cite{miller2013model} and this works for heterogeneous networks with Markovian SIR epidemics, although extensions of it for arbitrary infection and recovery processes are also possible~\cite{sherborne2018mean}.

Pairwise models have been extremely popular and the very first model for regular networks and SIR epidemics~\cite{rand1999correlation,keeling1999effects} has been generalised to heterogeneous networks~\cite{eames2002modeling}, preferentially mixing networks~\cite{eames2002modeling}, directed~\cite{sharkey2006pair} and weighted networks~\cite{rattana2013class}, adaptive networks~\cite{gross2006epidemic,kiss2012modelling,szabo2016oscillating}, and structured networks~\cite{house2009motif} among others. Perhaps this is due to its relative simplicity and transparency, whereby variables seem to make sense in a straightforward way and a basic understanding of the network and epidemic dynamics coupled with good bookkeeping leads to a valid and analytically tractable model. Pairwise models have been successfully used to derive analytically the epidemic threshold and final epidemic size, with these results mostly limited to networks without clustering. The propensity of contacts to cluster, i.e. that two friends of an individual/node are also friends of each other, is known to lead to many complications, and modelling epidemics on clustered networks using analytically tractable mean-field models is still limited to networks with very specific structural features~\cite{house2009motif,newman2009random,miller2009percolation,miller2009spread,karrer2010random,volz2011effects,ritchie2016beyond}. However, using approaches borrowed from percolation theory~\cite{miller2009spread} and focusing more on the stochastic process itself~\cite{trapman2007analytical}, some results have been obtained. For example, in~\cite{miller2009spread} it was shown that for the susceptible-infected-recovered (SIR) epidemic on clustered networks with heterogeneous degree distributions, the basic reproduction number is given by
\begin{equation}
R_0=\frac{\langle k^2-k\rangle}{\langle k\rangle}T-\frac{2\langle n_{\triangle}\rangle}{\langle k\rangle}T^2+\cdots,
\end{equation}
where $\langle k^i \rangle$ stands for the $i$th moment of the degree distribution, $T$ is the probability of infection spreading across a link connecting an infected to a susceptible node and $\langle n_{\triangle}\rangle$ denotes the average number of triangles that a node belongs to. 
The first positive term corresponds to what the threshold is for configuration-type networks with no clustering. The second term, which is negative, shows that clustering reduces the epidemic threshold when compared to the unclustered case. 


For pairwise models, clustering first manifests itself by requiring a different and more complex closure, which makes the analysis of the resulting system, even for regular networks and SIR dynamics, challenging. Furthermore, it turns out that such closure may in fact fail to conserve pair-level relations and may not accurately reflect the early growth of quantities such as closed loops of three with all nodes being infected~\cite{house2010impact}. Such considerations have led to an improved closure being developed in an effort to keep as many true features of the exact epidemic process as possible~\cite{house2010impact}. In this paper we will focus on the classic pairwise model for regular networks with clustering, using both the simplest closure and a variant of the improved closure. We will show that by working with two fast variables corresponding to the correlations that develop during the spread of the epidemic, we can analytically determine the epidemic threshold as an asymptotic expansion in terms of the clustering coefficient. 

The use of fast variables is not completely new and has been used in~\cite{keeling1999effects} and~\cite{eames2008modelling} but the epidemic threshold has only been obtained numerically and it was framed in terms of a growth-rate-based threshold which of course is equivalent to the basic reproduction number at the critical point of the epidemic spread. In~\cite{eames2008modelling} a hybrid pairwise model incorporating random and clustered contacts is considered, with the analysis focused on the growth-rate-based threshold. The authors of \cite{eames2008modelling} managed to derive a number of results, some analytic (the critical clustering coefficient for which an epidemic can take off) and some semi-analytic, and they have shown, in agreement with most studies, that clustering inhibits the spread of the epidemic when compared to an equivalent network without clustering but with equivalent parameter values governing the epidemic process. However, no analytic expression for the threshold was provided. 

More recently, in~\cite{li2018epidemic}, the epidemic threshold in a pairwise model for clustered networks with closures based on the number of links in a motif, rather than nodes, was calculated as
\begin{equation}
R_0=\frac{(n-1)\tau}{\tau+\gamma+\tau\phi}, 
\end{equation}
where $n$ is the average number of links per node, $\phi$ is the global clustering coefficient (see later sections for a formal definition) and $\tau$ and $\gamma$ are the infection and recovery rates, respectively. The expression above can be expanded in terms of $\phi$ to give 
\begin{equation}
R_0=\frac{(n-1)\tau}{\tau+\gamma}\left(\frac{1}{1+\phi\frac{\tau}{\tau+\gamma}}\right)\simeq \frac{(n-1)\tau}{\tau+\gamma}\left( 1-\phi \frac{\tau}{\tau+\gamma}+\cdots\right),
\end{equation}
which again reflects that clustering reduces the epidemic threshold. 

Building on these results, and effectively extending the work in~\cite{keeling1999effects,eames2008modelling}, our paper sets out to present a method to determine the epidemic threshold analytically and apply it in the context of pairwise models with two different closures for clustered networks. The paper is structured as follows. In Section 2 we outline the model with closures for unclustered and clustered networks discussed in Section 3. In Section 4 we briefly review existing results and approaches for the pairwise model with the simple closure and then focus on the correlation structure in terms of fast variables, showing that the epidemic threshold can be expressed via the solution of a cubic polynomial. This key solution is determined numerically and analytically as an asymptotic expansion in terms of the clustering coefficient. In Section 5 we show that our approach works for a compact version of the improved closure, thus validating and generalising our approach.
Finally we conclude with a discussion of the results, including comparing the threshold to other known results and touching upon a number of possible extensions.

\section{Model formulation} \label{ModelForm}

\subsection{The network} \label{ModelForm_network}

We begin by considering a population of $N$ individuals with its contact structure described by an undirected network with adjacency matrix $G=(g_{ij})_{i,j=1,2,\dots, N}$ where $g_{ij}=1$ if nodes $i$ and $j$ are connected and zero otherwise. Self-loops are excluded, so $g_{ii}=0$ and $g_{ij}=g_{ji}$ for all $i,j=1,2, \dots N$. The network is static and regular, such that each individual has exactly $n$ edges or links. The sum over all elements of $G$ is defined as $||G||=\sum_{i,j}g_{ij}$. Hence, the number of doubly counted links in the network is $||G||=nN$. More importantly, using simple matrix operations on $G$, we can calculate the clustering coefficient of the network 
\begin{equation} \label{equation:phi_clustering}
\phi=\frac{trace(G^{3})}{||G^{2}||-trace(G^{2})}, 
\end{equation}
where $trace(G^{3})$ yields six times the number of closed triples or loops of length three (uniquely counted) and $||G^{2}||-trace(G^{2})$, twice the number of triples (open and closed, also uniquely counted).

\subsection{SIR dynamics} \label{ModelForm_SIRDynamics}
The standard SIR epidemic dynamics on a network is considered. The dynamics are driven by two processes: (a) infection and (b) recovery from infection. Infection can spread from an infected/infectious node to any of its susceptible neighbours and this is modelled as a Poisson point process with per-link infection rate $\tau$. Infectious nodes recover from infection at constant rate $\gamma$.

\subsection{The unclosed pairwise model} \label{ModelForm_UnClosedPWModel}
Let $A_{i}$ equal 1 if the individual at node $i$ is of type $A$ and equal zero otherwise. Then single nodes (singles) of type $A$ can be counted as $[A]=\sum_{i}A_{i}$, pairs of nodes (pairs) of type $A-B$ can be counted as $[AB]=\sum_{i,j}A_{i}B_{j}g_{ij}$ and triples of nodes (triples) of type $A-B-C$ can be counted as $[ABC]=\sum_{i,j,k}A_{i}B_{j}C_{k}g_{ij}g_{jk}$. This method of counting means that pairs are counted once in each direction, so $[AB]=[BA]$, and $[AA]$ is even. Using this notation to keep track of singles, pairs and triples leads to the following system of pairwise equations describing the SIR epidemic on a regular network:

\begin{eqnarray} 
\dot{[S]}&=&-\tau[SI], \label{equations:unclosed_pw_model_one} \\ 
\dot{[I]}&=&\tau[SI] -\gamma[I],\label{equations:unclosed_pw_model_two} \\ 
\dot{[SI]}&=&\tau([SSI]-[ISI]-[SI])-\gamma[SI], \label{equations:unclosed_pw_model_four}\\
\dot{[SS]}&=&-2\tau[SSI], \label{equations:unclosed_pw_model_three} \\ 
\dot{[II]}&=&2\tau([ISI]+[SI])-2\gamma[II]. \label{equations:unclosed_pw_model_end} \end{eqnarray}

We note that equations \eqref{equations:unclosed_pw_model_four}-\eqref{equations:unclosed_pw_model_end} contain triples which are not defined within the entire system of equations \eqref{equations:unclosed_pw_model_one}-\eqref{equations:unclosed_pw_model_end}. The flow between compartments and the rates are illustrated in Fig.~\ref{fig:SIR_PW_flow}. To determine solutions of the system, we must find a way to account for these triples in terms of pairs and singles, a method referred to as \emph{closing the system}.

\begin{figure}
\begin{center}
  \begin{tikzpicture}
%

    \node[box, fill=colorS!60] (S1) at (2,-2.8) {$[S]$};
    \node[box, fill=colorI!60] (I1) at (2,-6.3) {$[I]$};
    \node[box, fill= colorR!60] (R1) at (2,-9.8) {$[R]$};
    \path [->, right] (S1) edge node {$\tau [SI]$} (I1);
    \path [decay] (I1) -- (R1) node [midway, right] {$\gamma [I]$};

     \node[box, fill=colorS!60] (SS) at (10.5,-2.8) {$[SS]$};   
    \node[box, fill=colorI!25] (IS) at (12.25,-4.55) {$[IS]$};
    \node[box, fill=colorS!60] (SI) at (8.75,-4.55) {$[SI]$};
    \node[box, fill=colorI!60] (II) at (10.5,-6.3) {$[II]$};
    \node[box, fill = colorR!25] (RS) at (14,-6.3) {$[RS]$};
    \node[box, fill = colorS!60] (SR) at (7,-6.3) {$[SR]$};
    \node[box, fill = colorR!25] (RI) at (12.25,-8.05) {$[RI]$};
    \node[box, fill = colorI!60] (IR) at (8.75,-8.05) {$[IR]$};
    \node[box, fill = colorR!60] (RR) at (10.5,-9.8) {$[RR]$};
    \path [->, above, sloped] (SS) edge node {$\tau [SSI]$} (SI);
    \path [->, above, sloped] (SS) edge node {$\tau [ISS]$} (IS);
    \path [decay] (SI) -- (SR)  node [midway, above, sloped] {$\gamma [SI]$};
    \path [->,above, sloped] (SI.337) edge node {$\tau [SI] $} (II.113);
    \path [->, above, sloped] (IS.203) edge node {$\tau [IS] $} (II.67);
    \path [->,below, sloped] (SI.293) edge node {$\tau [ISI]$} (II.157);
    \path [->, below, sloped] (IS.247) edge node {$\tau [ISI]$} (II.23);
    \path [decay] (IS) -- (RS)  node [midway, above, sloped] {$\gamma [IS]$};
    \path [->, below, sloped] (SR) edge node {$\tau [ISR]$} (IR) ;
    \path [decay] (II) -- (RI) node [midway, above, sloped] {$\gamma [II]$};
    \path [decay] (II) -- (IR) node [midway, above, sloped] {$\gamma [II]$};
    \path [->, below, sloped] (RS) edge node {$\tau [RSI]$} (RI) ;
    \path [decay] (RI) -- (RR) node [midway, below, sloped] {$\gamma [RI]$};
    \path [decay] (IR) -- (RR) node [midway, below, sloped] {$\gamma [IR]$};
  \end{tikzpicture}
\end{center}
\caption{Flow diagrams showing the flux between compartments of singles (left) and compartments of pairs (right). In the compartments of pairs, straight arrows denote infections coming from within the pair (with a rate depending on a pair) or from outside the pair (with a rate depending on a triple), and curved arrows denote a recovery.  The colour indicates the status of the ``first'' node in the pair.  Symmetry allows us to conclude that some of the variables (see lighter shaded variables on the right hand side of the pairs diagram) must equal their symmetric version (e.g. $[RS]=[SR]$), so we do not need to directly calculate both quantities.}
\label{fig:SIR_PW_flow}
\end{figure}
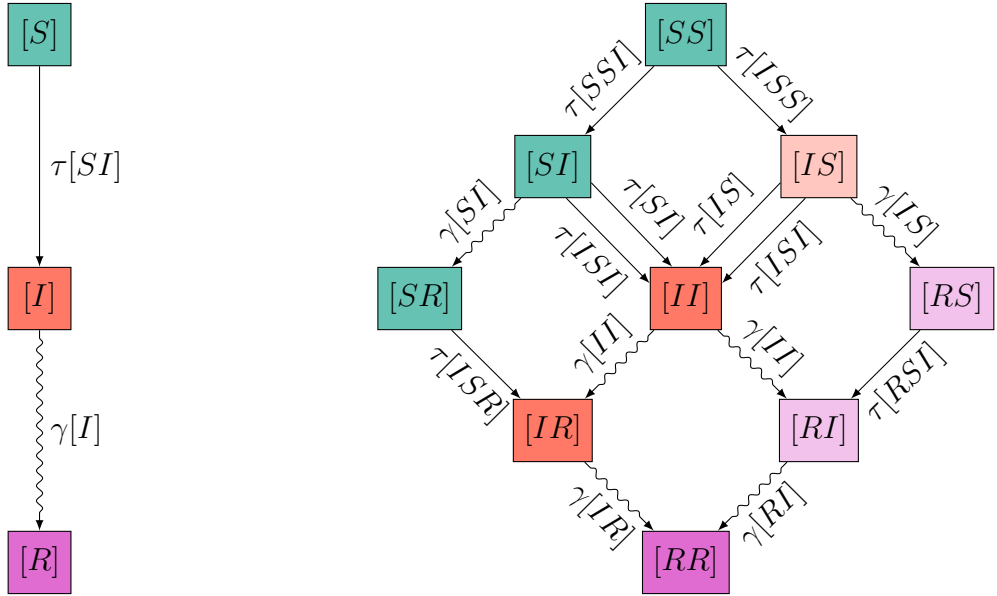

\section{Closures} \label{sec:closures}


A quick inspection of the unclosed pairwise system (\ref{equations:unclosed_pw_model_one})-(\ref{equations:unclosed_pw_model_end}) reveals that only triples of type $[ASI]$ need closing, with $A\in\{S, I\}$. These triples, as well as triples of type $[RSI]$, are illustrated in Fig.~\ref{fig:ig:clo_around_S_I} for unclustered and clustered networks. 

\begin{figure}
\centering
        \subfloat[]{
            \begin{tikzpicture}
              \tikzset{localnode/.style={BasicNode}}
                \node[RecCircBig] (lefttop) at ( -1,1.3) {$X_{n-1}$};
                \node[SusCircBig] (middle) at ( 0,0) {{\large $S$}};
                \node[InfCircBig] (righttop) at ( 1,1.3) {$I$};
                \node[RecCircBig] (rightbottom) at (1,-1.3) {$X_1$};
                \node[RecCircBig] (leftbottom) at (-1,-1.3) {$X_2$};
                \coordinate (dot1) at (-1.3,-0.4) {};
                \coordinate (dot2) at (-1.5,0) {};
                \coordinate (dot3) at (-1.3,0.4) {};
                \draw [-] (lefttop) -- (middle);
                \draw [-] (righttop) -- (middle);
               \draw [-] (middle) -- (leftbottom);
                \draw [-] (middle) -- (rightbottom);
                \draw [dashed] (dot1) -- (middle);
                                \draw [dashed] (dot2) -- (middle);
                                                \draw [dashed] (dot3) -- (middle);
            \end{tikzpicture}
        \label{fig:ig:clo_around_S_I_no_clust}
        }
       \hspace{2cm}
        \subfloat[]{
  \begin{tikzpicture}
              \tikzset{localnode/.style={BasicNode}}
                \node[RecCircBig] (lefttop) at ( -1,1.3) {$X_{n-1}$};
                \node[SusCircBig] (middle) at ( 0,0) {{\large $S$}};
                \node[InfCircBig] (righttop) at ( 1,1.3) {$I$};
                \node[RecCircBig] (rightbottom) at (1,-1.3) {$X_1$};
                \node[RecCircBig] (leftbottom) at (-1,-1.3) {$X_2$};
                \coordinate (dot1) at (-1.3,-0.4) {};
                \coordinate (dot2) at (-1.5,0) {};
                \coordinate (dot3) at (-1.3,0.4) {};
                \draw [-] (lefttop) -- (middle);
                \draw [-] (righttop) -- (middle);
               \draw [-] (middle) -- (leftbottom);
                \draw [-] (middle) -- (rightbottom);
                \draw [dashed] (dot1) -- (middle);
                \draw [dashed] (dot2) -- (middle);
                \draw [dashed] (dot3) -- (middle);
                \draw [dashed,->,red] (righttop) to[bend right=30] (lefttop);
                \draw [dashed,->,red] (righttop) to[bend left=30] (rightbottom);
                \draw [dashed,->,red] (righttop) to[bend left=40] (leftbottom);                                                                                                              
            \end{tikzpicture}        \label{fig:clo_around_S_I_clust}
        }
    \caption{General setup for a central susceptible node with a given infected neighbour for (a) unclustered and (b) clustered regular networks with degree $n$. Dashed arrows indicate that the infected node may be connected to the other neighbours of the central susceptible node.  Random variables $X_1, X_2, \dots, X_{n-1}$ take values from the set $\{S, I, R\}$.}
    \label{fig:ig:clo_around_S_I}
\end{figure}
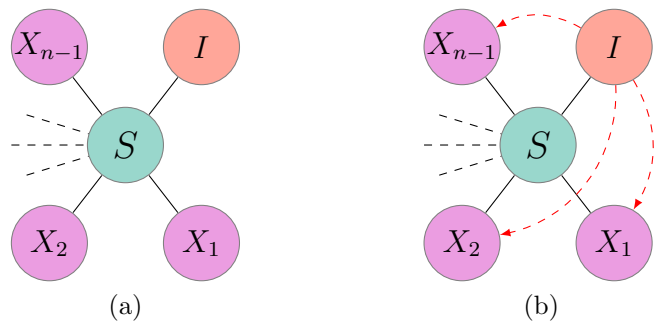

\subsection{Closure for unclustered networks}

First, we consider the situation depicted in Fig~\ref{fig:ig:clo_around_S_I_no_clust}. Several observations can be made. The expected number of $A-S$ type links is $[AS]$ and the total number of links emanating from susceptible nodes counted across the whole network is $n[S]$. Hence, the most straightforward approximation would be to assume that $X_{i}$, with $i=1,2,\dots,n-1$, are independent and identically Bernoulli distributed random variables with a probability of success being equal to $p_{A|S-I}^{uc}=\frac{[AS]}{n[S]}$. Averaging across the whole network leads to
\begin{equation}
[ASI]=[SI](n-1)p_{A|S-I}^{uc}=\frac{n-1}{n}\frac{[AS][SI]}{[S]}.
\label{eq:basic_clo}
\end{equation}
It is important to note that the new, closed system is effectively an approximation of the exact pairwise model and one should question if the closure \eqref{eq:basic_clo} conserves the properties of the stochastic process and of the counting on the network. For example, it is expected that in the closed system the number of nodes is conserved, i.e. $[S]+[I]+[R]=N$. Furthermore, the number of pairs of different types must sum to $nN$. More subtle conditions refer to the conservation of link types at node level ($[SS]+[SI]+[SR]=n[S]$) and pair level ($[SSI]+[ISI]+[RSI]=(n-1)[SI]$), respectively. It turns out that the closure for unclustered networks \eqref{eq:basic_clo} conserves these relations \cite{kiss2017mathematics}. Finally, the validity or appropriateness of closures can be empirically assessed by looking at the initial growth rate of the number of open and closed triples, where the number of open triples comprised of three infectious nodes should grow differently to the number of such closed triples. Of course such subtle tests/comparisons  are usually preceded by direct comparisons between the numerical solution of the closed pairwise system and explicit stochastic network simulations for a range of parameters. Such tests initially focus on prevalence of infection and final epidemic size but may include expected number of pairs. 
 
\subsection{Closures for clustered networks}

\subsubsection{Simple closure}
The presence of closed loops of length three, as illustrated in Fig.~\ref{fig:clo_around_S_I_clust}, introduces some complications. Namely, a neighbour of the central susceptible that is itself connected to an infected neighbour of the central node, is less likely to be susceptible due to the added pressure from the infected neighbour, when compared to the case when the force of infection is distributed evenly, as is the case for the closure for unclustered networks \eqref{eq:basic_clo}. More precisely, the epidemic process on the network displays clear correlations. In \cite{cator2014nodal} it has been shown that the exact $SIS$ and $SIR$ epidemics on networks are non-negatively correlated in the sense that $\mathbf{P}(I_{i}I_{j})\ge \mathbf{P}(I_i)\mathbf{P}(I_j)$. Here, $\mathbf{P}(I_iI_j)$ represents the probability that nodes $i$ and $j$, connected by a link, are both infected, while $\mathbf{P}(I_i)$ stands for the probability of node $i$ being infected. For this result to hold, all processes must be Markovian and infection rates across all links and recovery rates of all nodes have to be fixed a priori. Using the pairwise model for an $SIS$ epidemic on an unclustered network with closure~\eqref{eq:basic_clo}, it has been shown that the same correlation is preserved when averaging at the population level~\cite{kiss2017mathematics}. While the proof has not been executed for the pairwise $SIR$ model, intuitively we expect to find the same correlation structure. Based on these observations, we assume that the correlation structure in exact $SIS$ and $SIR$ epidemics on networks averaged at the population level is maintained. Hence, the inequalities
\begin{equation}
[SI]\le n[S]\frac{[I]}{N}, \,\,\, [II]\ge n[I]\frac{[I]}{N}, \,\,\, \text{and} \,\,\, [SS]\ge n[S]\frac{[S]}{N},
\label{eq:correl_ineq}
\end{equation}
hold, where $[AB]$ and $[A]$ with $A,B\in \{S,I\}$ represent the expected counts of pairs and singles of the corresponding types taken from the exact model, i.e., the continuous time full Markov chain.

Intuitively, this means that as the epidemic spreads on the network, infected nodes are more likely to have neighbours which are themselves infected (either those that infected the node or were infected by it), and at the `front' of the epidemic we would expect to observe a `sea' of susceptible nodes alongside a `front' of links between susceptible and infected nodes that drives the epidemic. Hence, clustering and correlations need to be accounted for and a new $p_{A|S-I}^{c}$ for clustered networks needs to be defined. This has been done in \cite{keeling1999effects} and it relies on a correlation factor, $C_{AB}$, that is able to capture the propensity of nodes of type $A$ and $B$ being neighbouring nodes. This is given by
\begin{equation}
C_{AB}=\frac{[AB]}{n[A]\frac{[B]}{N}},
\label{eq:corr_cab}
\end{equation}
where $A,B\in \{S,I\}$. This effectively compares the expected number of edges of type $[AB]$ to what its value would be if nodes were labelled at random with $[A]$ nodes of type $A$ and $[B]$ nodes of type $B$. If $C_{AB}>1$, then nodes of type $A$ and $B$ are positively correlated, whereas if nodes of type $A$ and $B$ are negatively correlated, $C_{AB}<1$. As expected, $C_{AB}=1$ means that nodes are effectively labelled as type $A$ or $B$ at random. Equation~\eqref{eq:correl_ineq} implies that
\begin{equation}
C_{SI}\le 1, \,\,\, C_{II}\ge 1 \,\,\, \text{and} \,\,\, C_{SS}\ge 1.
\label{eq:correl_ineq_cab}
\end{equation}
We can augment $p_{A|S-I}^{uc}=\frac{[AS]}{n[S]}$ to reflect these observations, leading to $p_{A|S-I}^{c}=\frac{[AS]}{n[S]}C_{AI}$.
However, before the closure can be expressed, open and closed loops need to be treated separately. In order to do this, we split the closure based on whether the neighbour whose state is to be determined is part of a closed loop of three nodes and thus in direct contact with an infectious node, or not. This leads to
\begin{equation}
p_{A|S-I}^{c}=   \begin{cases}
     p_{A|S-I}^{uc} & \text{with probability}\ (1-\phi),\\
     p_{A|S-I}^{uc}C_{AI}& \text{with probability}\ \phi,
    \end{cases}
\label{eq:clo_clust_simple}
\end{equation}
where $\phi$ is defined in equation \eqref{equation:phi_clustering}. With this in mind, the closure can be derived by averaging equation \eqref{eq:basic_clo} over the unclustered and clustered parts of the network. This leads to
\begin{eqnarray}
[ASI]&=&(1-\phi)(n-1)[SI]p_{A|S-I}^{uc}+\phi(n-1)[SI]p_{A|S-I}^{uc}C_{AI}\\ &=&\frac{(n-1)}{n}\frac{[AS][SI]}{[S]}\left((1-\phi)+\phi\frac{N[AI]}{n[A][I]}\right). \label{eq:clust_simple_closure}
\end{eqnarray}
Framing $p_{A|S-I}^{uc}$ and $p_{A|S-I}^{c}$ more generally and independently of the network type, i.e. simply considering $p_A$, the following statement holds:
\begin{prop} 
Consider a closure of the following form $[ASI]=(n-1)[SI]p_{A}$. If $\sum_{A}p_{A}=1$, where $A$ is taken over all possible states, then
$\sum_{A}[ASI]=(n-1)[SI]$.
\end{prop}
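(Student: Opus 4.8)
The plan is to exploit the fact that the factor $(n-1)[SI]$ appearing in the closure is independent of the state $A$ over which we sum, so that the identity follows by a one-line manipulation combined with the normalisation hypothesis. The content of this proposition is conceptual rather than computational: it verifies that any closure written in the normalised form automatically respects a pair-level conservation relation.

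First I would take the defining closure relation $[ASI]=(n-1)[SI]p_{A}$ as given for each admissible state $A$ (here $A$ ranges over the states accessible to the third node of the triple, namely $\{S,I,R\}$), and sum both sides over all such $A$:
\[
\sum_{A}[ASI]=\sum_{A}(n-1)[SI]p_{A}.
\]
Next, since the prefactor $(n-1)[SI]$ carries no dependence on the summation index $A$, I would pull it outside the sum to obtain
\[
\sum_{A}[ASI]=(n-1)[SI]\sum_{A}p_{A}.
\]
Finally, invoking the hypothesis $\sum_{A}p_{A}=1$ collapses the remaining sum and yields $\sum_{A}[ASI]=(n-1)[SI]$, as required.

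There is essentially no analytical obstacle here, so the only step warranting care is a bookkeeping one: I must ensure that the set of states over which $A$ is summed in the conclusion is exactly the set over which the normalisation $\sum_{A}p_{A}=1$ is assumed, i.e.\ all states available to the third node. Once this is fixed, the proposition simply records that any closure of the form $(n-1)[SI]p_{A}$ respects the pair-level conservation relation $[SSI]+[ISI]+[RSI]=(n-1)[SI]$ quoted earlier in the text. This is precisely the property one wants to confirm for both the simple closure \eqref{eq:clust_simple_closure} and the improved closure treated later, since it guarantees that the closed system does not spuriously create or destroy $S$--$I$ links at the triple level; I would therefore conclude by remarking that it suffices to check the single algebraic condition $\sum_{A}p_{A}=1$ for each closure used in the paper.
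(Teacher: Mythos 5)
Your argument is correct and is exactly the paper's proof: sum the closure over $A$, pull the $A$-independent factor $(n-1)[SI]$ outside the sum, and apply the normalisation $\sum_{A}p_{A}=1$. The extra remarks on the range of summation and the pair-level conservation relation are consistent with the surrounding text but add nothing beyond the paper's one-line computation.
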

\begin{proof}
$\sum_{A}[ASI]=(n-1)[SI]\sum_{A}p_{A}=(n-1)[SI].$
\end{proof}

\subsubsection{Improved closure}
We note that while $p_{A|S-I}^{uc}$ satisfies the above proposition, $p_{A|S-I}^{c}$ does not. In particular, we find
\begin{align*}
\sum_{A}[ASI]&=\sum_{A}(n-1)[SI]p_{A|S-I}^{uc}=\sum_{A}(n-1)[SI]\frac{[AS]}{n[S]}\\
&=\frac{(n-1)[SI]}{n[S]}\sum_{A}[AS]=\frac{(n-1)[SI]}{n[S]}n[S]=(n-1)[SI].
\end{align*}
However, for the clustered part of the network this is not the case. We find that
\begin{align*}
\sum_{A}[ASI]&=\sum_{A}(n-1)[SI]p_{A|S-I}^{c}=\sum_{A}(n-1)[SI]\frac{[AS]}{n[S]}\frac{N[AI]}{n[A][I]}\\
&=\frac{(n-1)N[SI]}{n^2[S][I]}\sum_{A}\frac{[AS][AI]}{[A]},
\end{align*}
which does not result in the desired $(n-1)[SI]$. This can be corrected in a straightforward way by defining 
\begin{equation}
p_{A|S-I}^{c_{new}}=   \begin{cases}
     p_{A|S-I}^{uc} & \text{with probability}\ (1-\phi),\\
     \frac{p_{A|S-I}^{c}}{\sum_{a}p_{a|S-I}^{c}} & \text{with probability}\ \phi.
    \end{cases}
\label{eq:clo_clust_improved}
\end{equation}
Hence we can now write
\begin{align*}
\sum_{A}[ASI]&=\sum_{A}\left((1-\phi)[ASI]+\phi[ASI]\right)\\&=(1-\phi)(n-1)[SI]\sum_{A}p_{A|S-I}^{uc}+\phi(n-1)[SI]\sum_{A}p_{A|S-I}^{c_{new}}\\
&=(1-\phi)(n-1)[SI]\sum_{A}\frac{[AS]}{n[S]}+\phi(n-1)[SI]\sum_{A}\frac{p_{A|S-I}^{c}}{\sum_{a}p_{a|S-I}^{c}}\\
&=(1-\phi)(n-1)[SI]\frac{1}{n[S]}\sum_{A}[AS]+\phi(n-1)[SI]\\
&=(1-\phi)(n-1)[SI]+\phi(n-1)[SI]\\&=(n-1)[SI],
\end{align*}
as required. It is informative to investigate the relationship between the various probability models that lead to different closures. This is summarised in the following proposition.
\begin{prop}
For closures applied across the clustered part of the network and assuming that the number of nodes in state $R$ is negligible, it follows that 
\begin{equation}
p_{S|S-I}^{c_{new}}=\frac{[SS][I]}{[SS][I]+[II][S]} ,\,\,\,  p_{S|S-I}^{c}=\frac{[SS]}{n[S]}\frac{N[SI]}{n[S][I]},\,\,\, p_{S|S-I}^{uc}=\frac{[SS]}{n[S]},
\end{equation}
and
\begin{equation}
p_{S|S-I}^{c} \le p_{S|S-I}^{uc} \,\,\, \text{and} \,\,\, p_{S|S-I}^{c_{new}} \le p_{S|S-I}^{uc}.
\label{eq:ineq_for_p_s}
\end{equation}
\label{prop:inequ_closures}
\end{prop}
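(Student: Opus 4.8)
The plan is to split the statement into two tasks: first derive the three closed forms for the conditional probabilities directly from their definitions, and then deduce the two inequalities using the correlation inequalities \eqref{eq:correl_ineq_cab} together with the node-level pair-conservation relations. The formula $p_{S|S-I}^{uc}=\frac{[SS]}{n[S]}$ is immediate from $p_{A|S-I}^{uc}=\frac{[AS]}{n[S]}$ with $A=S$, and $p_{S|S-I}^{c}=\frac{[SS]}{n[S]}\frac{N[SI]}{n[S][I]}$ follows by substituting $C_{SI}=\frac{N[SI]}{n[S][I]}$ into $p_{A|S-I}^{c}=p_{A|S-I}^{uc}C_{AI}$. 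For $p_{S|S-I}^{c_{new}}$ I would use the normalisation from \eqref{eq:clo_clust_improved}, $p_{S|S-I}^{c_{new}}=\frac{p_{S|S-I}^{c}}{\sum_a p_{a|S-I}^{c}}$, and invoke the hypothesis that $R$ is negligible so the sum runs only over $a\in\{S,I\}$. Writing out $p_{S|S-I}^{c}$ and $p_{I|S-I}^{c}$, factoring the common prefactor $\frac{N[SI]}{n^2[S][I]}$, and using $[IS]=[SI]$ reduces the ratio to $\frac{[SS]/[S]}{[SS]/[S]+[II]/[I]}$, which rearranges to the claimed $\frac{[SS][I]}{[SS][I]+[II][S]}$.

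The first inequality is essentially automatic: since $p_{S|S-I}^{c}=p_{S|S-I}^{uc}C_{SI}$ with $p_{S|S-I}^{uc}\ge 0$ and $C_{SI}\le 1$ from \eqref{eq:correl_ineq_cab}, we obtain $p_{S|S-I}^{c}\le p_{S|S-I}^{uc}$ at once. The second inequality is where the real work lies. Assuming $[SS]>0$, I would clear denominators so that $\frac{[SS][I]}{[SS][I]+[II][S]}\le\frac{[SS]}{n[S]}$ becomes the equivalent $n[S][I]\le [SS][I]+[II][S]$. At this point I would bring in the node-level conservation relations $[SS]+[SI]+[SR]=n[S]$ and $[II]+[IS]+[IR]=n[I]$; with $R$ negligible these give $[SS]=n[S]-[SI]$ and $[II]=n[I]-[SI]$. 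Substituting collapses the right-hand side to $2n[S][I]-[SI]([S]+[I])$, so the inequality reduces to $[SI]([S]+[I])\le n[S][I]$. Finally, using $[S]+[I]\approx N$ (again since $R$ is negligible), this is precisely the correlation inequality $[SI]\le \frac{n[S][I]}{N}$ of \eqref{eq:correl_ineq}, which holds by assumption.

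The main obstacle is the second inequality: unlike the first it does not reduce to a single-factor comparison, and the crux is to recognise, after clearing denominators and eliminating $[SS]$ and $[II]$ via the conservation relations, that the residual condition $[SI]([S]+[I])\le n[S][I]$ is exactly the population-level correlation inequality $C_{SI}\le 1$ in disguise. I would also be careful to flag the two distinct places where the hypothesis that $R$ is negligible is used — once to truncate the normalising sum to $\{S,I\}$ in the derivation of $p_{S|S-I}^{c_{new}}$, and once both to replace $[S]+[I]$ by $N$ and to drop $[SR]$ and $[IR]$ from the conservation relations — since these are precisely what yield the clean closed forms and enable the final reduction.
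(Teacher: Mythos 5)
Your proposal is correct: the three closed forms are derived exactly as the paper intends, the first inequality is the paper's own one-line argument ($p_{S|S-I}^{c}=p_{S|S-I}^{uc}C_{SI}$ with $C_{SI}\le 1$), and your verification that the normalisation over $a\in\{S,I\}$ collapses to $\frac{[SS][I]}{[SS][I]+[II][S]}$ matches what the paper asserts without spelling out. Where you diverge is in the second inequality. The paper rewrites $p_{S|S-I}^{c_{new}}=\frac{[SS]}{[SS]+[II][S]/[I]}$ and bounds the denominator from below: it chains the two correlation inequalities $[SI]\le n[S][I]/N$ and $[II]\ge n[I]^2/N$ from \eqref{eq:correl_ineq} to get $[SI]\le [II][S]/[I]$, and then uses the single conservation relation $[SS]+[SI]=n[S]$ to conclude $[SS]+[II][S]/[I]\ge n[S]$. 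You instead clear denominators and eliminate both $[SS]$ and $[II]$ via the node-level conservation relations for $S$ and for $I$ (plus $[S]+[I]\approx N$), reducing everything to the single correlation inequality $C_{SI}\le 1$. The two routes are equivalent under the hypothesis that $R$ is negligible --- indeed, with all three conservation relations in force, $C_{II}\ge 1$ and $C_{SI}\le 1$ become the same statement --- but they distribute the burden differently: the paper leans on two of the assumed correlation inequalities and only one conservation identity, while your version needs only $C_{SI}\le 1$ at the price of invoking conservation three times. Your explicit bookkeeping of where the ``$R$ negligible'' hypothesis enters is a useful clarification that the paper leaves implicit.
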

\begin{proof}
All three probabilities follow from their definitions and assuming that $A\in\{S,I\}$.  
Since $S-I$ links are negatively correlated~\eqref{eq:correl_ineq}, it follows that $C_{SI}=\frac{N[SI]}{n[S][I]}\le 1$ and as a result
\begin{equation}
p_{S|S-I}^{c}=\frac{[SS]}{n[S]}C_{SI}\le\frac{[SS]}{n[S]} =p_{S|S-I}^{uc}.
\end{equation}  

While $p_{S|S-I}^{c}$ has a natural interpretation (it is a simple discounted variant of the probability from the unclustered network case and takes into account the observation that if the neighbour of a central susceptible node is connected to one of the infected neighbours of the same node then it is less likely that the node in question is susceptible), the interpretation of $p_{S|S-I}^{c_{new}}$ is less obvious. A close inspection reveals that $p_{S|S-I}^{c_{new}}$ can be rewritten as
\begin{equation}
p_{S|S-I}^{c_{new}}=\frac{[SS][I]}{[SS][I]+[II][S]}=\frac{[SS]}{[SS]+[II]\frac{[S]}{[I]}}.
\label{eq:pS_cnew}
\end{equation}
However, combining $[SI] \le n[S]\frac{[I]}{N}$ with $[I] \le \frac{N}{n}\frac{[II]}{[I]}$, as given in equation~\eqref{eq:correl_ineq}, leads to $[SI] \le [II]\frac{[S]}{[I]}$. Finally, using the relation $[SI] \le [II]\frac{[S]}{[I]}$ in equation~\eqref{eq:pS_cnew} yields
\begin{equation} \label{eq:p_c_new_ineq}
p_{S|S-I}^{c_{new}}=\frac{[SS]}{[SS]+[II]\frac{[S]}{[I]}} \le \frac{[SS]}{[SS]+[SI]}=\frac{[SS]}{n[S]}=p_{S|S-I}^{uc}.
\end{equation}
Equation \eqref{eq:p_c_new_ineq} illustrates that as expected $p_{S|S-I}^{c_{new}} \le p_{S|S-I}^{uc}$. Again, this simply shows that for clustered networks and for the setup in Fig.~\ref{fig:ig:clo_around_S_I}, it is less likely to find neighbours who are susceptible compared with the unclustered network case. 
\end{proof}

Taking into account the new way of defining $p_{A|S-I}^{c_{new}}$, the improved closure yields
\begin{align}
[ASI]&=(1-\phi)[ASI]+\phi[ASI] \notag\\
&= (1-\phi)(n-1)[SI]\frac{[AS]}{n[S]}+\phi (n-1)[SI]\frac{\frac{[AS]}{n[S]}C_{AI}}{\sum_{a}p_{a|S-I}^{c}}\notag\\
&= (1-\phi)\frac{(n-1)}{n}\frac{[AS][SI]}{[S]}+\phi (n-1)[SI]\frac{\frac{[AS]}{n[S]}\frac{N[AI]}{n[A][I]}}{\sum_{a}\frac{[aS]}{n[S]}\frac{N[aI]}{n[a][I]}}\notag\\
&=(1-\phi)\frac{(n-1)}{n}\frac{[AS][SI]}{[S]}+\phi (n-1)\frac{[AS][SI][IA]}{[A]\sum_{a}\frac{[aS][aI]}{[a]}}\notag\\
&=(n-1)\left((1-\phi)\frac{[AS][SI]}{n[S]}+\phi \frac{[AS][SI][IA]}{[A]\sum_{a}[aS][aI]/[a]}\right).
\label{eq:improved_closure}
\end{align}

We finally note that the closures rely heavily on the assumption of how the states of the neighbours are distributed, and the assumption of independent and identically Bernoulli-distributed variables is a strong one. 
For clustered networks in particular, we have illustrated different ways of incorporating correlations induced by closed cycles of length three. Despite these seemingly strong assumptions, it is known that the pairwise model for unclustered networks is equivalent to the edge-based compartmental equivalent on configuration networks \cite{miller2014epidemic,kiss2017mathematics} and the latter has been shown to be the limiting system of the stochastic network epidemic model \cite{decreusefond2012large,janson2014law}. For clustered networks we are not aware of such results.

\section{Results for the pairwise model with the simple closure} \label{section:Results_simp_clo}

\subsection{Background} \label{subsec:background}

Using the simple closure for clustered networks \eqref{eq:clust_simple_closure}, and writing $\xi=\frac{(n-1)}{n}$, we obtain the following closed pairwise model equations describing an SIR epidemic on a clustered regular network of $N$ individuals with degree $n$: \begin{align} \label{eq:pw_clust_simple_closure_1} \dot{[S]}&=-\tau[SI], \\ \label{eq:pw_clust_simple_closure_2} \dot{[I]}&=\tau[SI]-\gamma[I], \\ \label{eq:pw_clust_simple_closure_3} \dot{[SI]}&=-(\tau+\gamma)[SI]+\tau\xi\frac{[SS][SI]}{[S]}\left((1-\phi)+\phi\frac{N[SI]}{n[S][I]}\right)-\tau\xi\frac{[SI]^{2}}{[S]}\left((1-\phi)+\phi\frac{N[II]}{n[I]^{2}}\right), \\ \dot{[SS]}&=-2\tau\xi\frac{[SS][SI]}{[S]}\left((1-\phi)+\phi\frac{N[SI]}{n[S][I]}\right), \\ \dot{[II]}&=2\tau[SI]-2\gamma[II]+2\tau\xi\frac{[SI]^{2}}{[S]}\left((1-\phi)+\phi\frac{N[II]}{n[I]^{2}}\right). \label{eq:pw_clust_simple_closure_end} \end{align}

For model equations \eqref{eq:pw_clust_simple_closure_1}-\eqref{eq:pw_clust_simple_closure_end}, in \cite{keeling1999effects} the basic reproductive ratio ($R_{0}$) is considered. Starting from the evolution equations of infectious individuals leads to \begin{align*} \dot{[I]}&=\tau[SI]-\gamma[I] \\ &=\left(\frac{\beta[S]}{N}C_{SI}-\gamma\right)[I], \end{align*} where $C_{SI}$ is defined in equation \eqref{eq:corr_cab}. Taking into account that $\tau n=\beta$ and that initially $[S]\simeq N$, in~\cite{keeling1999effects} it is claimed that $R_{0}=C_{SI}\beta/\gamma$. It is important to note that this $R_0$ is not the classical $R_0$ in the sense of being the expected number of new infections produced by a typical infectious individual when introduced in a fully susceptible population. Rather it can be thought of as a growth-rate-based threshold, and has the same properties as the classical $R_0$ when both are exactly one. In what follows, we will simply refer to it as $R$~\cite{eames2008modelling,kiss2012modelling}.

Thus in order to determine $R$ explicitly, the authors in \cite{keeling1999effects} consider the early behaviour of $C_{SI}$ and find that this variable is given by the ordinary differential equation (ODE) \begin{equation} \dot{C_{SI}}=-\tau\left(C_{SI}+C_{SI}^{2}-n\xi(C_{SI}-C_{SI}^{2})(1-\phi)+n\xi C_{SI}^{2}\phi\frac{[I]C_{II}}{N}\right). \label{eq:differential_for_C_SI} \end{equation} The ODE above, \eqref{eq:differential_for_C_SI}, however depends on the behaviour of $[I]C_{II}/N$ and in~\cite{keeling1999effects} it was found that \begin{equation} \frac{[I]C_{II}}{N}\longrightarrow\frac{2\tau C_{SI}}{\gamma+\beta C_{SI}-2\xi\beta C_{SI}^{2}\phi}. \label{eq:cii_no_clust} \end{equation} Considering the quasi-equilibrium of $C_{SI}$, referred to as $C_{SI}^{*}$, in equation~\eqref{eq:differential_for_C_SI} together with the expression for $[I]C_{II}/N$ in equation~\eqref{eq:cii_no_clust}, one finds that $C_{SI}^{*}$ is given by \begin{equation} 1+C_{SI}^{*}-n\xi(1-C_{SI}^{*})(1-\phi)+\frac{2\tau n\xi\phi {C_{SI}^{*}}^{2}}{\gamma+\beta C_{SI}^{*}-2\xi\beta {C_{SI}^{*}}^{2}\phi}=0. \label{eq:solve_C_SI_for_thresh_clust} \end{equation}
Hence, $R$ can be calculated as $C_{SI}^{*}\beta/\gamma$, at least numerically. Variables such as $C_{SI}$ and $C_{II}$ describe the correlations between the states of neighbouring nodes on the network as the epidemic unfolds and these have been studied numerically in~\cite{keeling1999effects}. 

For model equations \eqref{eq:pw_clust_simple_closure_1}-\eqref{eq:pw_clust_simple_closure_end} and when there is no clustering present in the network structure (thus $\phi=0$), a further simplification of equation \eqref{eq:solve_C_SI_for_thresh_clust} can be achieved~\cite{keeling1999effects}. To determine $R=C_{SI}^{*}\beta/\gamma$ in this case, simply solve \begin{equation} 1+C_{SI}^{*}-n\xi(1-C_{SI}^{*})=0 \end{equation} to find $C_{SI}^{*}=\frac{n-2}{n}$ and thus $R=\frac{(n-2)\tau}{\gamma}$.

Unfortunately when $\phi\ne 0$, according to our knowledge, the quasi-equilibrium values can only be determined numerically via equation \eqref{eq:solve_C_SI_for_thresh_clust}. In what follows, we show that by working with two new variables, $\alpha=[SI]/[I]$ and $\delta=[II]/[I]$, which are still closely linked to the correlations formed during the spreading process, it is possible to obtain the epidemic threshold as the solution of a cubic equation and, more importantly, we show that this can be obtained as asymptotic expansion in powers of $\phi$.

\subsection{Epidemic threshold} \label{sec:ep_threshold}

Consider the initial phase of an infection invading an entirely susceptible population in the pairwise model, described by equations \eqref{eq:pw_clust_simple_closure_1}-\eqref{eq:pw_clust_simple_closure_end}. We find that 
\begin{equation} \dot{[I]}=\tau[SI]-\gamma[I]=\gamma[I]\left(\frac{\tau[SI]}{\gamma[I]}-1\right). 
\label{eq:new_threshold}
\end{equation} We know the quantity $\gamma[I]$ remains non-negative regardless of time in the epidemic process, and we choose to consider the threshold in terms of $\frac{[SI]}{[I]}$. This leads to $R=\frac{\tau[SI]}{\gamma[I]}$. When $R>1$ an epidemic will occur, and when $R<1$ the epidemic will die out. Although we know the values of $\tau$ and $\gamma$, to determine if an epidemic will occur \emph{a priori}, we require further knowledge about the quantity $\frac{[SI]}{[I]}$ at some initial time close to $t=0$. While this is similar to the approach taken in~\cite{keeling1999effects}, we focus on variables such as $\frac{[SI]}{[I]}$ and $\frac{[II]}{[I]}$, and we motivate our choice below. The problem of finding the epidemic threshold can be dealt with in at least two more different but equivalent ways. First, one can carry out a simple linear stability analysis of the disease-free steady state and this is shown in Appendices~\ref{sec:lin_stab_anal_simp_clo} and \ref{sec:lin_stab_anal_comp_imp_clo}. Second, the threshold can also be computed as the largest eigenvalue of the next generation matrix, see section~\ref{sec:discussion}. However, in both cases, the variables $[SI]/[I]$ and $[II]/[I]$ turn out to play a key role and their values for small times need to determined.



\subsection{Fast variables with the simple closure} \label{sec:correlation_struct_and_fast_vars}

To circumvent the problem of the ill-defined variables above, we exploit the fact that $\alpha=\frac{[SI]}{[I]}$ and $\delta=\frac{[II]}{[I]}$ are fast variables when compared to the time course of the epidemic. Fig.~\ref{fig:Fast_var} shows clearly that $\alpha$ and $\delta$ are fast compared to the epidemic process and that they quickly converge to a quasi-equilibrium. Hence, at early times $\alpha$ and $\delta$ attain their quasi-equilibrium values, and these are the values that can be used to compute the epidemic threshold. 
\begin{figure}
\centering
\subfloat[]{\includegraphics[width=0.40\textwidth]{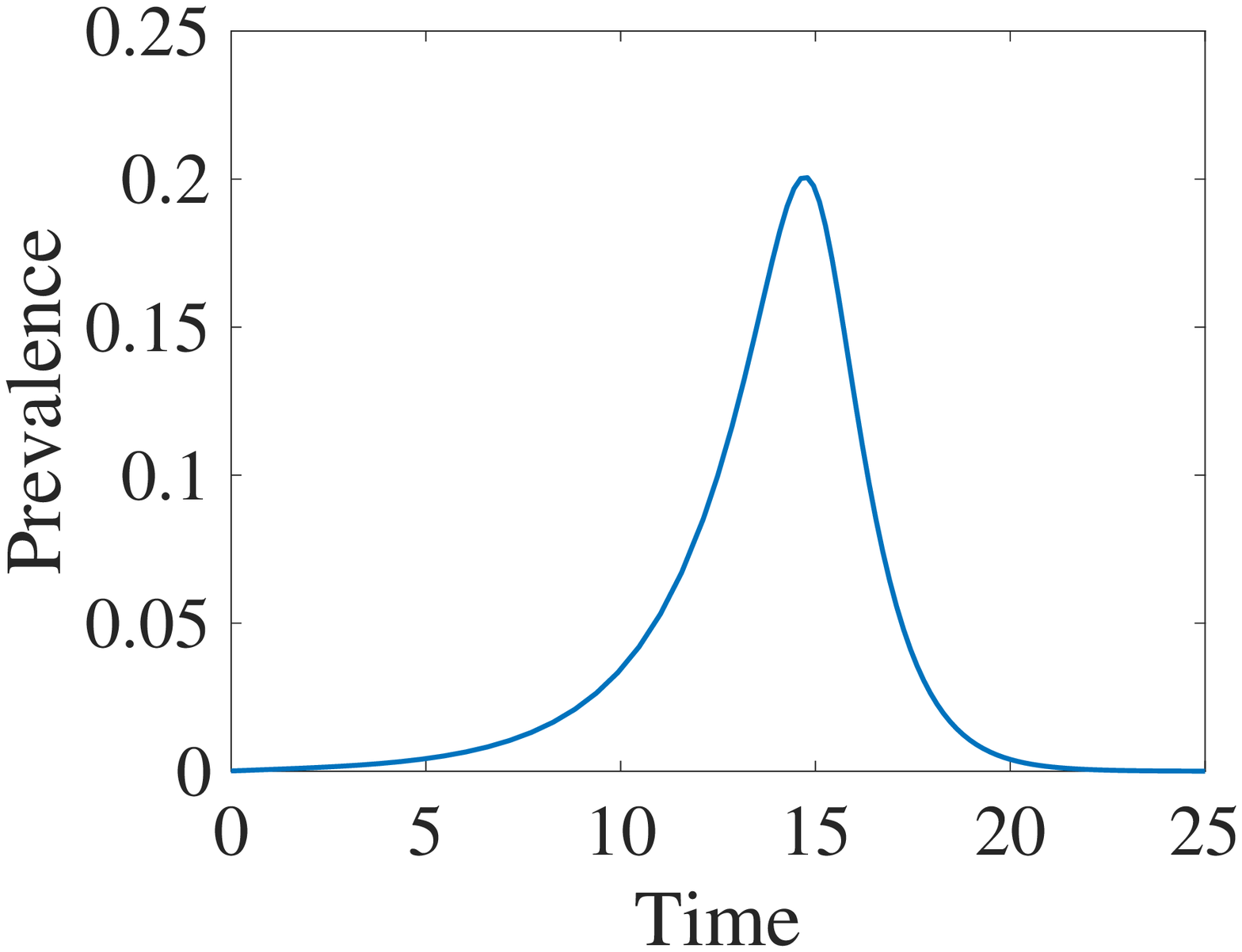}
 \label{fig:Fast_var_I_simp_clo}
    }
\hspace{1.0cm}
\subfloat[]{\includegraphics[width=0.40\textwidth]{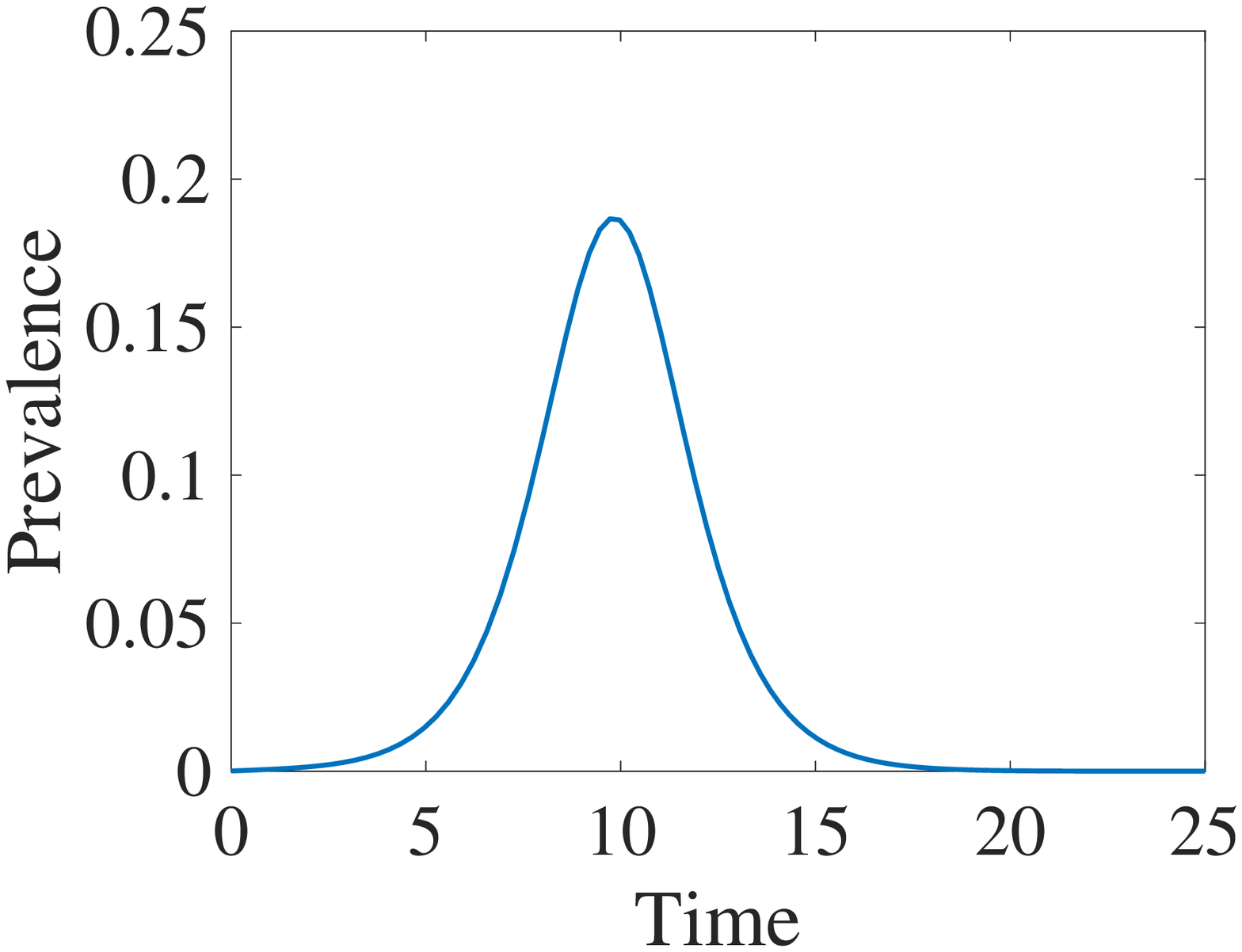}
 \label{fig:Fast_var_I_TH_full}
    }\hfill
\subfloat[]{\includegraphics[width=0.40\textwidth]{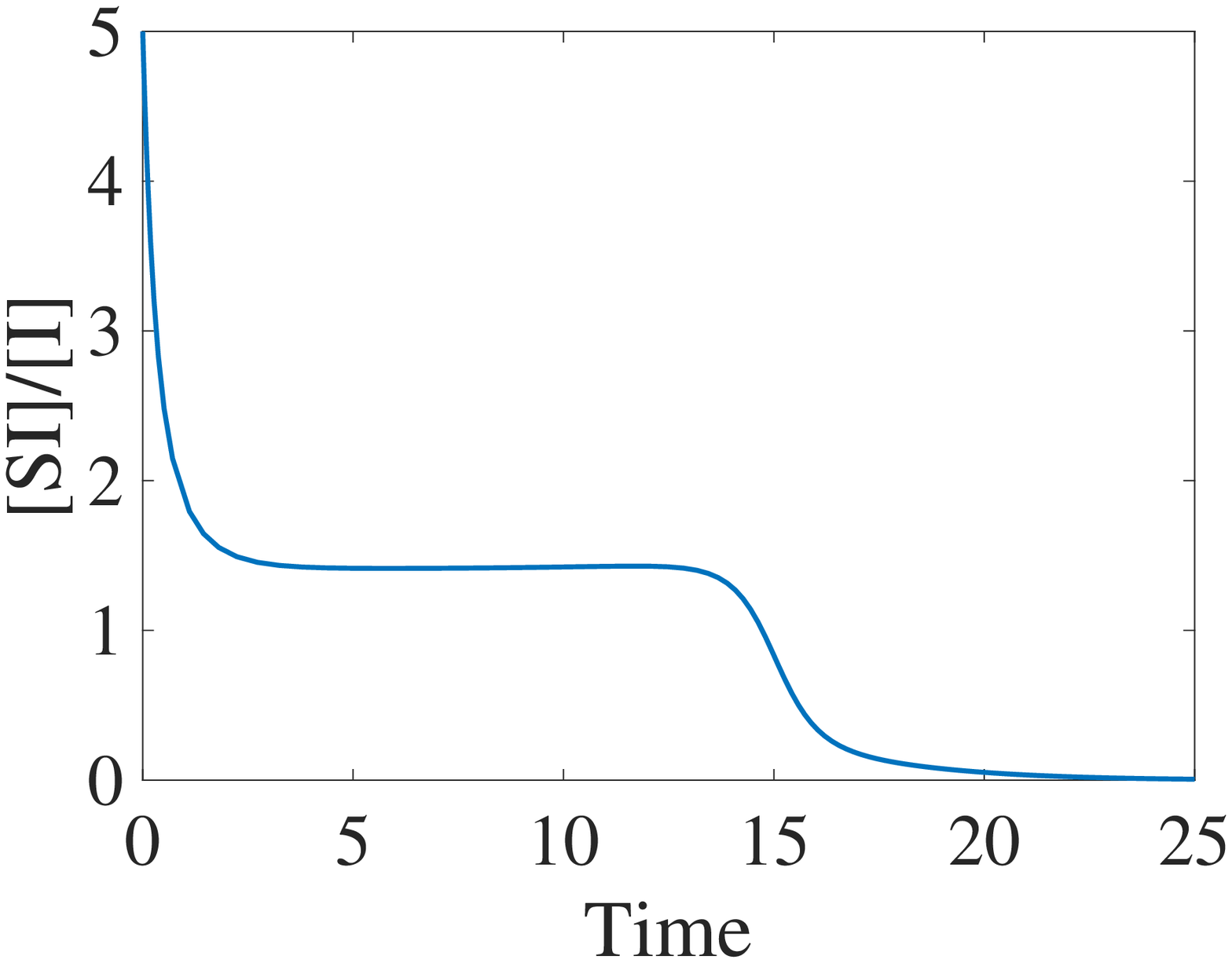}
\label{fig:Fast_var_SI_by_I_simp_clo}
        }
\hspace{1.0cm}
\subfloat[]{\includegraphics[width=0.40\textwidth]{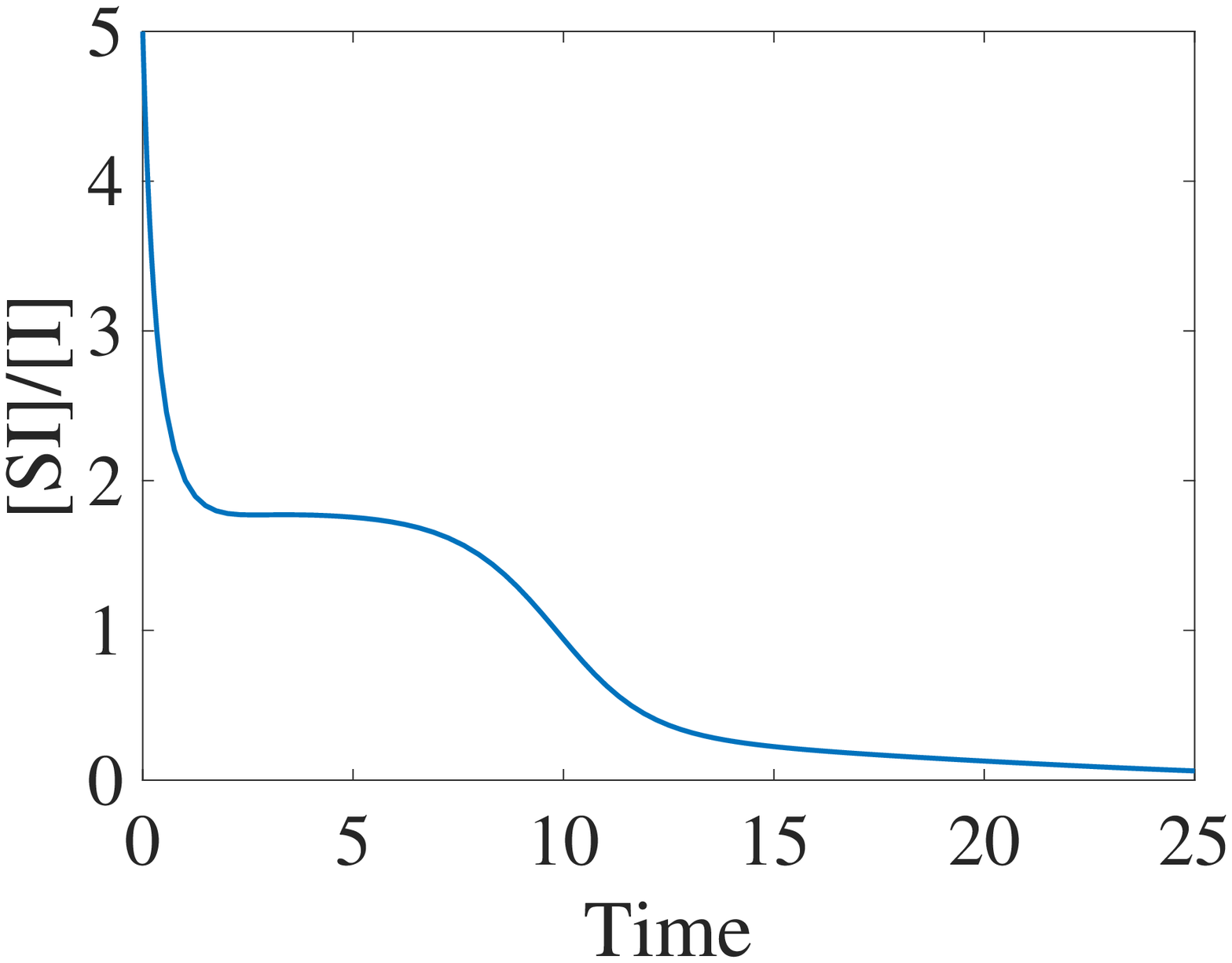}
\label{fig:Fast_var_SI_by_I_TH_full}
        }\hfill
\subfloat[]     {\includegraphics[width=0.40\textwidth]{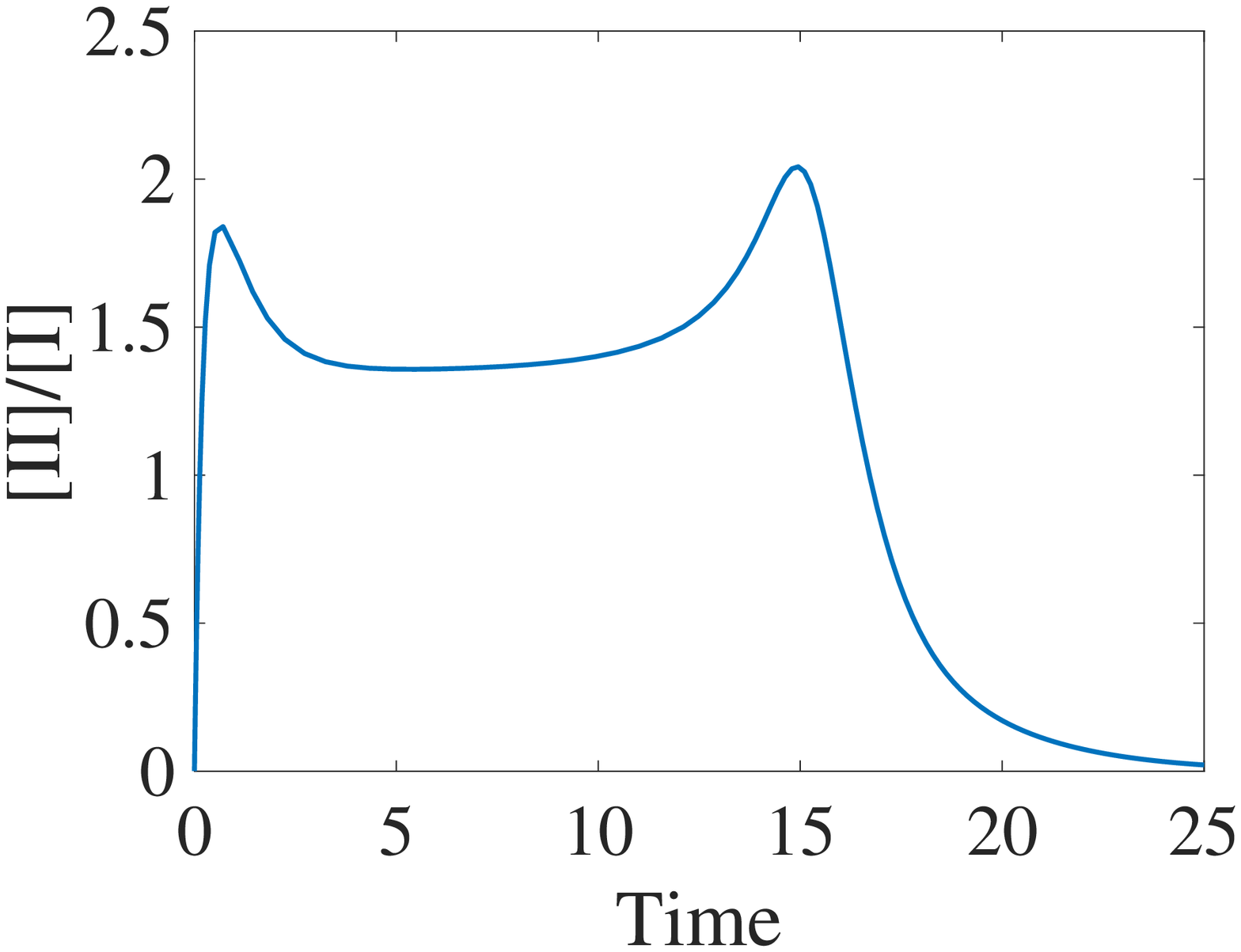}
\label{fig:Fast_var_II_by_I_simp_clo}
        }
        \hspace{1.0cm}
        \subfloat[]     {\includegraphics[width=0.40\textwidth]{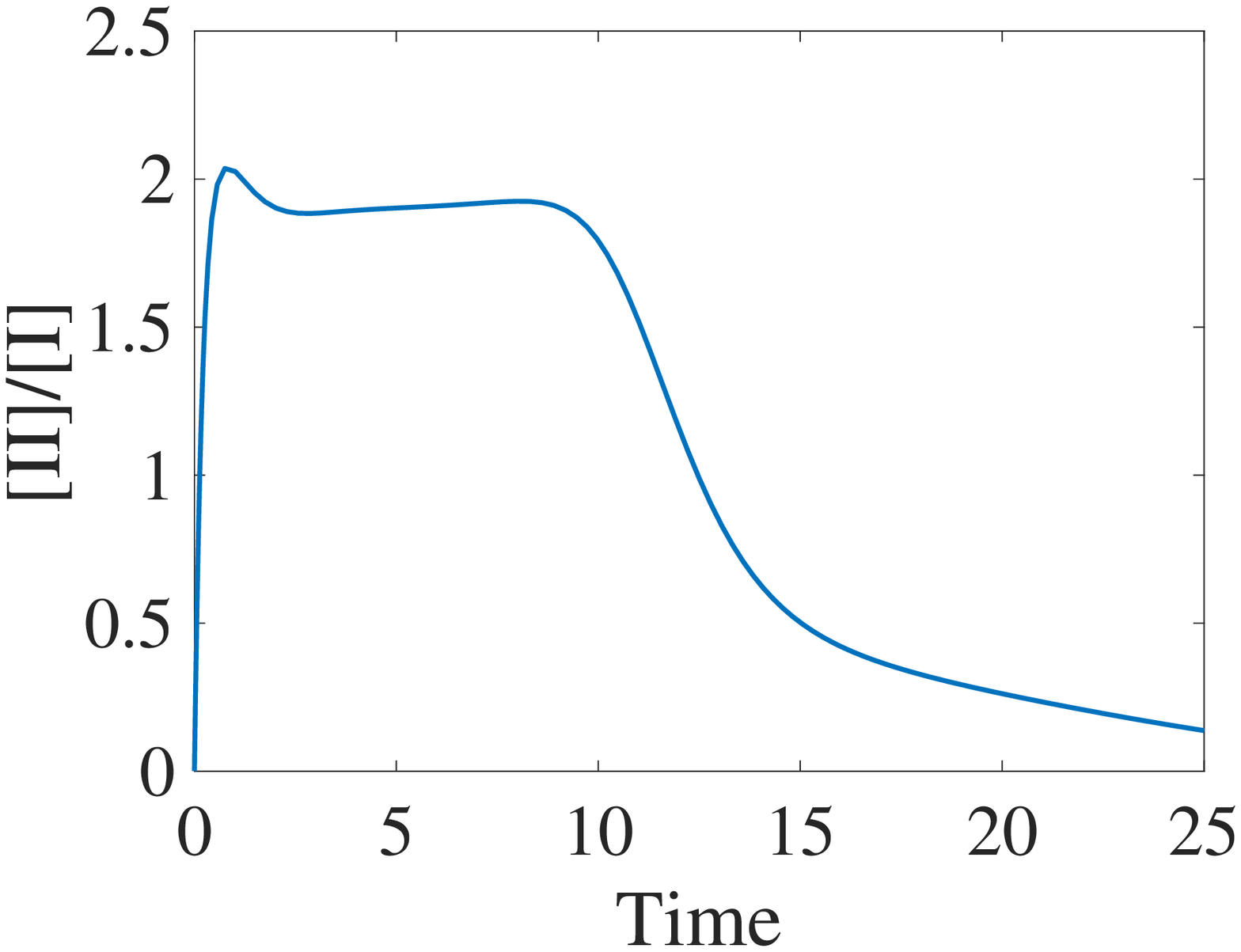}
\label{fig:Fast_var_II_by_I_TH_full}
        }
 \caption{Illustration of the dynamics of prevalence, $[I]/N$, over time ((a)-(b)), compared to that of $\alpha=\frac{[SI]}{[I]}$ ((c)-(d)) and $\delta=\frac{[II]}{[I]}$ ((e)-(f)) for the pairwise model with the simple (left column) and the improved (right column) closures. Parameter values are $N=10000$, $n=5$, $\phi=0.5$ and $\tau=\gamma=1$.}
  \label{fig:Fast_var}
\end{figure}

We continue by deriving differential equations for the variables $\alpha=\frac{[SI]}{[I]}$ and $\delta=\frac{[II]}{[I]}$. Differentiating $\alpha$ and $\delta$ and using equations~\eqref{eq:pw_clust_simple_closure_1}-\eqref{eq:pw_clust_simple_closure_end} leads to
\begin{eqnarray} \label{eq:diff_for_alpha} \frac{d\alpha}{dt}&=&-\tau\alpha+\tau\xi n(1-\phi)\alpha+\tau\xi\phi\alpha^{2}-\tau\xi\frac{1}{n}\phi\alpha^{2}\delta-\tau\alpha^{2}, \\ \frac{d\delta}{dt}&=&2\tau\alpha-\gamma\delta+2\tau\xi\frac{1}{n}\phi\alpha^{2}\delta-\tau\alpha\delta. \label{eq:diff_for_delta} \end{eqnarray}
The detailed derivation for equations \eqref{eq:diff_for_alpha} and \eqref{eq:diff_for_delta} can be found in Appendix~\ref{sec:correlation_struct_and_fast_vars_appendix}.

\subsubsection{Fast variables without clustering}

When clustering is negligible and hence $\phi=0$, we find that \begin{eqnarray} \label{eq:diff_for_alpha_no_clust} \frac{d\alpha}{dt}&=&-\tau\alpha+\tau\xi n\alpha-\tau\alpha^{2}, \\ \frac{d\delta}{dt}&=&2\tau\alpha-\gamma\delta-\tau\alpha\delta, \label{eq:diff_for_delta_no_clust} \end{eqnarray} where $\xi=\frac{(n-1)}{n}$. The steady states of the system \eqref{eq:diff_for_alpha_no_clust}-\eqref{eq:diff_for_delta_no_clust} are given by $(\alpha_{1}^{*},\delta_{1}^{*})=(0,0)$ and  $(\alpha_{2}^{*},\delta_{2}^{*})=\left((n-2),\frac{2\tau(n-2)}{\gamma+\tau(n-2)}\right)$. Based on equation~\eqref{eq:new_threshold}, it follows that $R=\frac{\tau \alpha_{2}^{*}}{\gamma}=\frac{\tau (n-2)}{\gamma}.$

\subsubsection{Fast variables with clustering}

When clustering is present in the network, the differential equations for $\alpha$ and $\delta$ are more complex and thus steady states are harder to compute. Firstly, we set equation \eqref{eq:diff_for_alpha} equal to zero and rearrange to isolate $\delta$, finding \begin{equation} \delta=\frac{-1+\xi n(1-\phi)+\xi\phi\alpha-\alpha}{\xi\frac{1}{n}\phi\alpha}. \label{eq:expression_1_for_delta*} \end{equation} 
Plugging equation \eqref{eq:expression_1_for_delta*} into equation~\eqref{eq:diff_for_delta} leads to the following cubic equation in $\alpha$:
\begin{align} 
&(2\tau\xi\phi(1-\xi\phi))\alpha^{3}+(\tau\xi n\phi-2\tau\xi^{2}n\phi(1-\phi)-\tau n)\alpha^{2}\notag\\
&+(-n(\tau+\gamma)+\tau\xi n^{2}(1-\phi)+\gamma\xi n\phi)\alpha+(\gamma\xi n^{2}(1-\phi)-\gamma n)=0. \label{eq:cubic_eqn_for_alpha} \end{align} 
The solution of the cubic equation \eqref{eq:cubic_eqn_for_alpha} provides the steady state(s) of system~\eqref{eq:diff_for_alpha}-\eqref{eq:diff_for_delta}, and allows the computation of the threshold via the formula $R^{c}=\frac{\tau \alpha^{*}}{\gamma}$. We note that the steady state in $\alpha$ has to be biologically plausible. $\alpha=\frac{[SI]}{[I]}$ restricts the steady state to be positive and to be less than $n$, since the average number of susceptible neighbours averaged over all infected nodes needs to be less than the average degree.

\subsection{Asymptotic expansion of the epidemic threshold} \label{sec:asymptotic_expansion_1}

The case of $\phi \ne 0$ can be regarded as a perturbation of the case with no clustering and we thus set out to find $\alpha$ using a perturbation method. More precisely, we seek to find the roots of the cubic polynomial, given in equation~\eqref{eq:cubic_eqn_for_alpha}, in terms of an asymptotic expansion in powers of $\phi$, that is \begin{equation} \alpha=\alpha_{0}+\phi\alpha_{1}+\phi^{2}\alpha_{2}+\cdots.\label{eq:substitution_for_alpha} \end{equation} Plugging \eqref{eq:substitution_for_alpha} into equation~\eqref{eq:cubic_eqn_for_alpha} leads to \begin{dmath} \label{eq:asymptotic_expansion} 2\tau\xi\phi(1-\xi\phi)(\alpha_{0}+\phi\alpha_{1}+\phi^{2}\alpha_{2}+\cdots)^{3}+(\tau\xi n\phi-2\tau\xi^{2}n\phi(1-\phi)-\tau n)(\alpha_{0}+\phi\alpha_{1}+\phi^{2}\alpha_{2}+\cdots)^{2}+(-n(\tau+\gamma)+\tau\xi n^{2}(1-\phi)+\gamma\xi n\phi)(\alpha_{0}+\phi\alpha_{1}+\phi^{2}\alpha_{2}+\cdots)+(\gamma\xi n^{2}(1-\phi)-\gamma n)=0. \end{dmath} Collecting terms of order $\phi^{0}$ in \eqref{eq:asymptotic_expansion} and after some algebra we find that $\alpha_{0}$ satisfies the equation below: \begin{equation}
n(\alpha_0-(n-2))(\tau \alpha_0-\gamma )=0.
\label{eq:quadratic_for_alpha_0} \end{equation} 
Hence, $\alpha_{0}=(n-2)$. As expected, this corresponds to the unclustered case. Collecting terms of order $\phi$ in \eqref{eq:asymptotic_expansion}, we find a polynomial in terms of $\alpha_{0}$ and $\alpha_{1}$: \begin{dmath} 2\tau\xi\alpha_{0}^{3}+(\tau\xi n-2\tau\xi^{2}n)\alpha_{0}^{2}+(\gamma\xi n-\tau\xi n^{2})\alpha_{0}-2\tau n\alpha_{0}\alpha_{1}+(\tau\xi n^{2}-n(\tau+\gamma))\alpha_{1}-\gamma\xi n^{2}=0 \label{eq:polynomial_for_alpha_0_1} \end{dmath}. Equation \eqref{eq:polynomial_for_alpha_0_1} leads to 
\begin{equation*} \alpha_{1}=\frac{\gamma\xi n^{2}-2\tau\xi\alpha_{0}^{3}+(2\tau\xi^{2}n-\tau\xi n)\alpha_{0}^{2}+(\tau\xi n^{2}-\gamma\xi n)\alpha_{0}}{\tau\xi n^{2}-n(\tau+\gamma)-2\tau n\alpha_{0}}, \end{equation*} which after substituting $\alpha_{0}=(n-2)$ and $\xi=\frac{(n-1)}{n}$ yields
\begin{equation} \label{eq:alpha_1} \alpha_{1}=\frac{-2(n-1)}{n^{2}}\left(\frac{2\tau(n-1)(n-2)+\gamma n}{\tau(n-2)+\gamma}\right). \end{equation} 
To summarise, we have determined the first two coefficients $\alpha_{0}$ and $\alpha_{1}$ of the asymptotic expansion \eqref{eq:substitution_for_alpha} which solves the cubic equation \eqref{eq:cubic_eqn_for_alpha}. Hence, the true solution is approximated by the following expansion: \begin{equation} \alpha=(n-2)-\phi\frac{2(n-1)}{n^{2}}\left(\frac{2\tau(n-1)(n-2)+\gamma n}{\tau(n-2)+\gamma}\right)+\cdots. \label{eq:approximation_for_alpha}\end{equation}
We make several remarks. First, the epidemic threshold will be given by $R^{c}=\tau \alpha/\gamma$. Second, the coefficient of the first order correction of $\alpha$ can be rearranged in terms of $R=\frac{\tau (n-2)}{\gamma}$, the threshold for the case of unclustered networks, leading to
\begin{equation}
R^{c}=R-\phi a\frac{\tau}{\gamma}\left(\frac{aR+1}{R+1}\right),
\end{equation}
where $a=2(n-1)/n$.

Finally, it is clear that due to the first order correction being negative, we have that 
\begin{equation}
R^{c}=R-\phi a\frac{\tau}{\gamma}\left(\frac{aR+1}{R+1}\right) \le R=\frac{\tau (n-2)}{\gamma}.
\end{equation} 

The goodness of the estimate for $\alpha$ \eqref{eq:approximation_for_alpha} is tested by comparing it to the numerical solution of the cubic equation~\eqref{eq:cubic_eqn_for_alpha}. This is done in Fig.~\ref{fig:simp_clo_thres} for five different values of the clustering coefficient. The asymptotic approximation performs well and only breaks down for values of clustering larger than $\simeq 0.3$. From the same figure it is clear that higher values of clustering push the critical $R^{c}=1$ curve to higher values of $\tau$ and $n$. Hence, in the presence of clustering a viable epidemic requires either a denser network or a higher transmission rate, noting that the transmission rate and the recovery rate $\gamma$ are not strictly independent. 

 \subsection{Numerical examples}
In the previous section we have demonstrated that for the pairwise model with the simplest closure for clustered networks, the determination of the epidemic threshold involves the solution of a cubic equation. While this can be done numerically, we presented an asymptotic approximation of the solution in powers terms of powers of the clustering coefficient $\phi$. In Fig.~\ref{fig:simp_clo_thres} we present a systematic test of the newly determined threshold by comparing the threshold based on the numerical solution of the cubic equation \eqref{eq:cubic_eqn_for_alpha} (continuous line in the ($\tau,n,0$) plane), the asymptotic approximation of the solution to the cubic equation \eqref{eq:approximation_for_alpha} (dashed line and markers - $\circ$) and the numerical solution of the full ODE system corresponding to the closed pairwise model \eqref{eq:pw_clust_simple_closure_1}-\eqref{eq:pw_clust_simple_closure_end}.

The agreement between the explicit numerical solution of the closed pairwise system and threshold based on the numerical solution of the cubic equation is excellent for all clustering values and other parameter combinations. Moreover, the agreement of these results with the threshold based on the asymptotic approximation is also excellent and remains valid for values of $0\le \phi \le 0.3$. Our numerical tests confirm that our analytical results are correct. The initial conditions for the closed pairwise systems were set in the following way: $[I](0)=I_0=1$, $[S](0)=N-I_0=S_0$, $[SI](0)=nI_0\frac{S_0}{N}$, $[SS](0)=nS_0\frac{S_0}{N}$ and $[II](0)=nI_0\frac{I_0}{N}$. The ODEs were run for a sufficiently long time ($T_{max}=1000$) to ensure that the epidemic died out. It is worth noting that the correct numerical solution of the cubic equation can be chosen by keeping in mind that $0\le \alpha=\frac{[SI]}{[I]}\le n$.

\begin{figure}
\centering
{\includegraphics[scale=0.25]{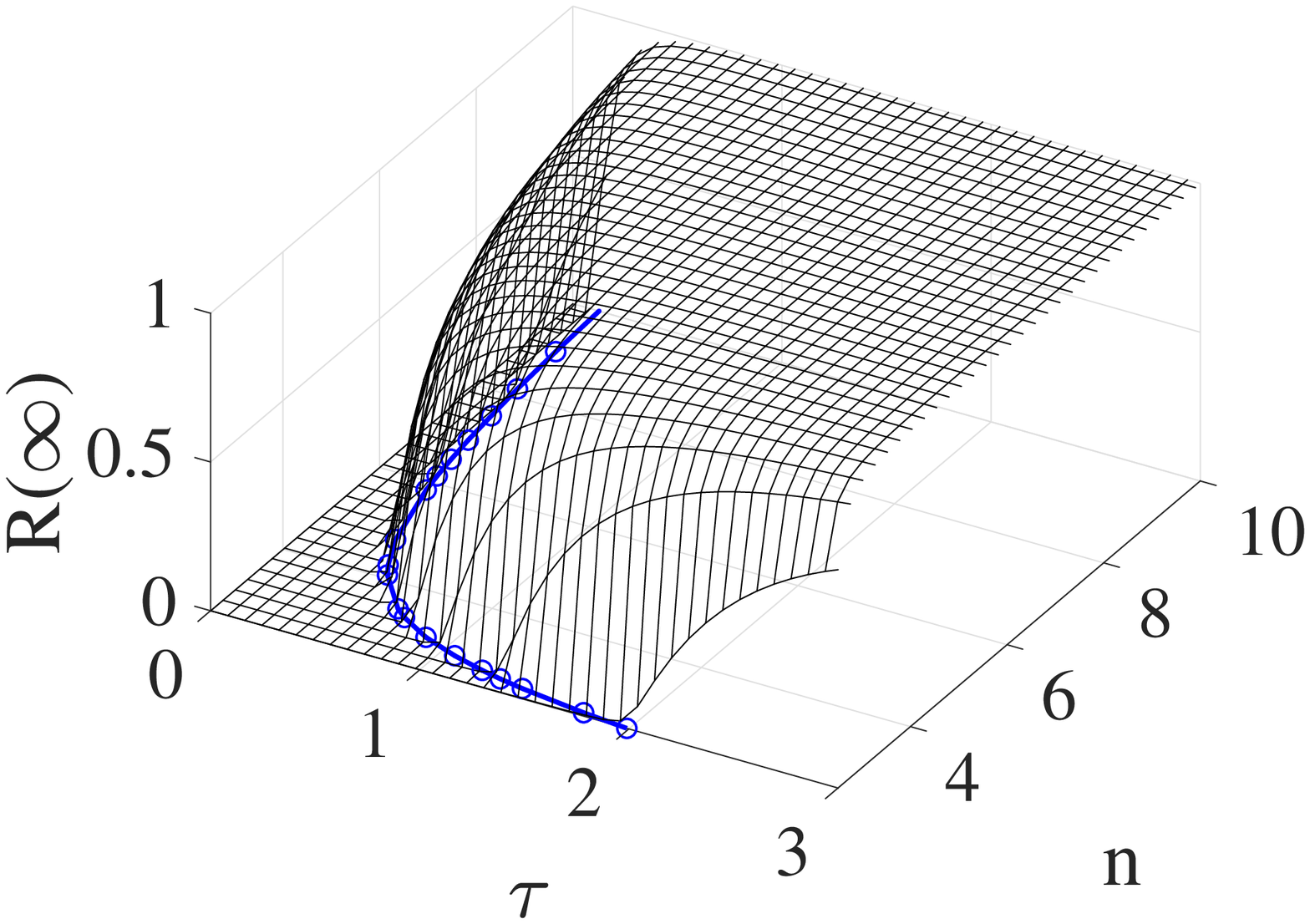}
    }
\hspace{1.0cm}
{\includegraphics[scale=0.25]{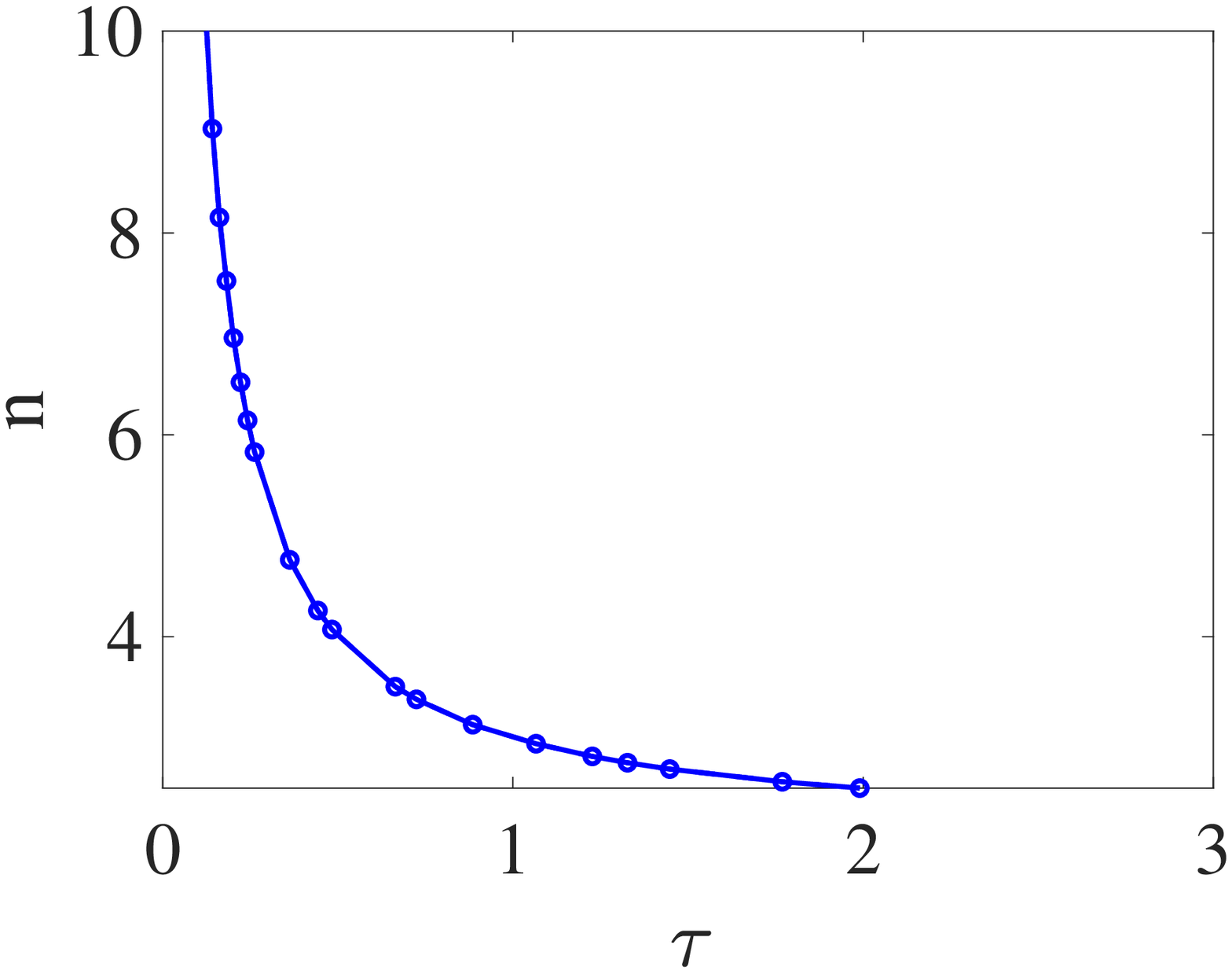}
    }
{\includegraphics[scale=0.25]{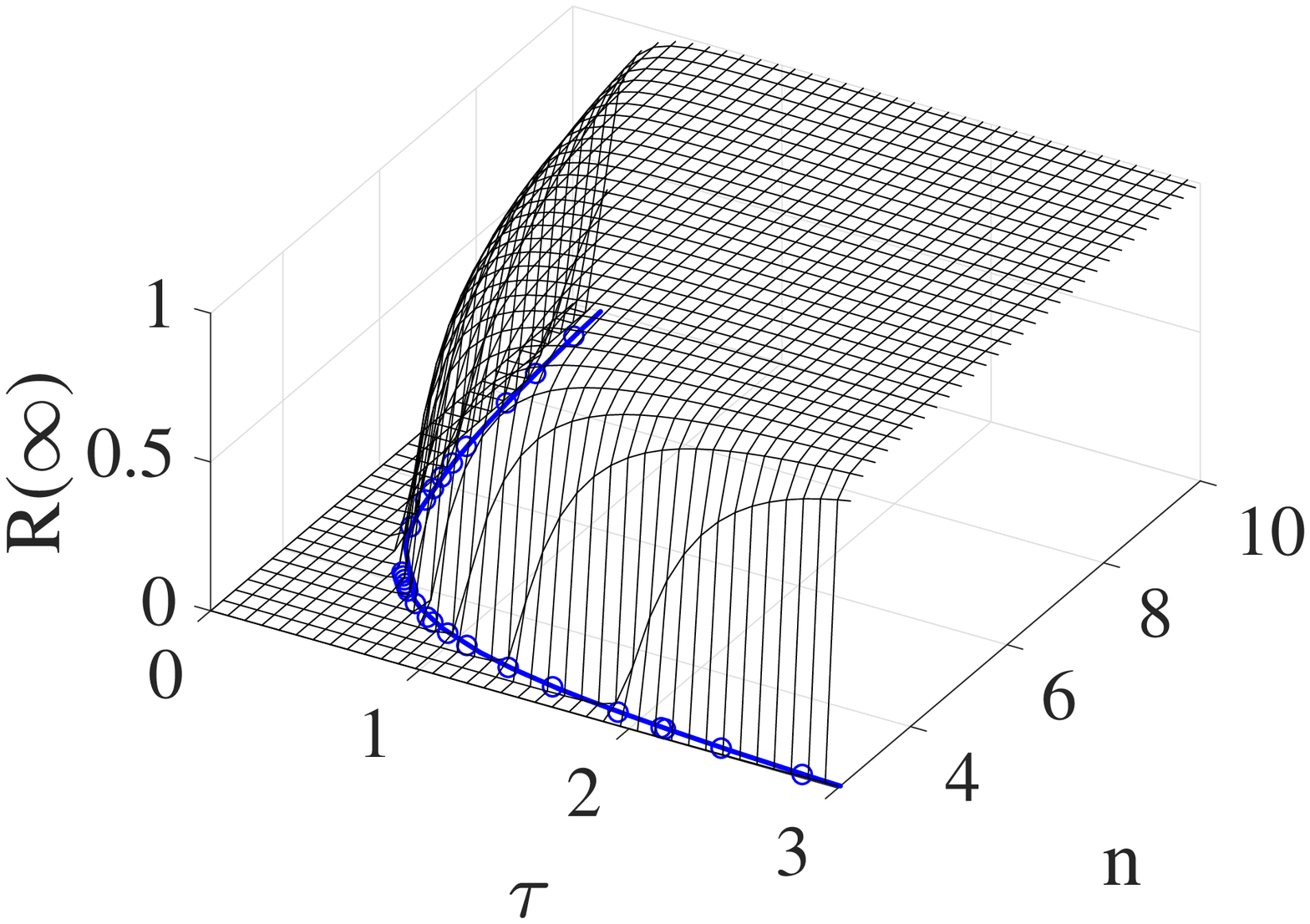}
        }
\hspace{1.0cm}
{\includegraphics[scale=0.25]{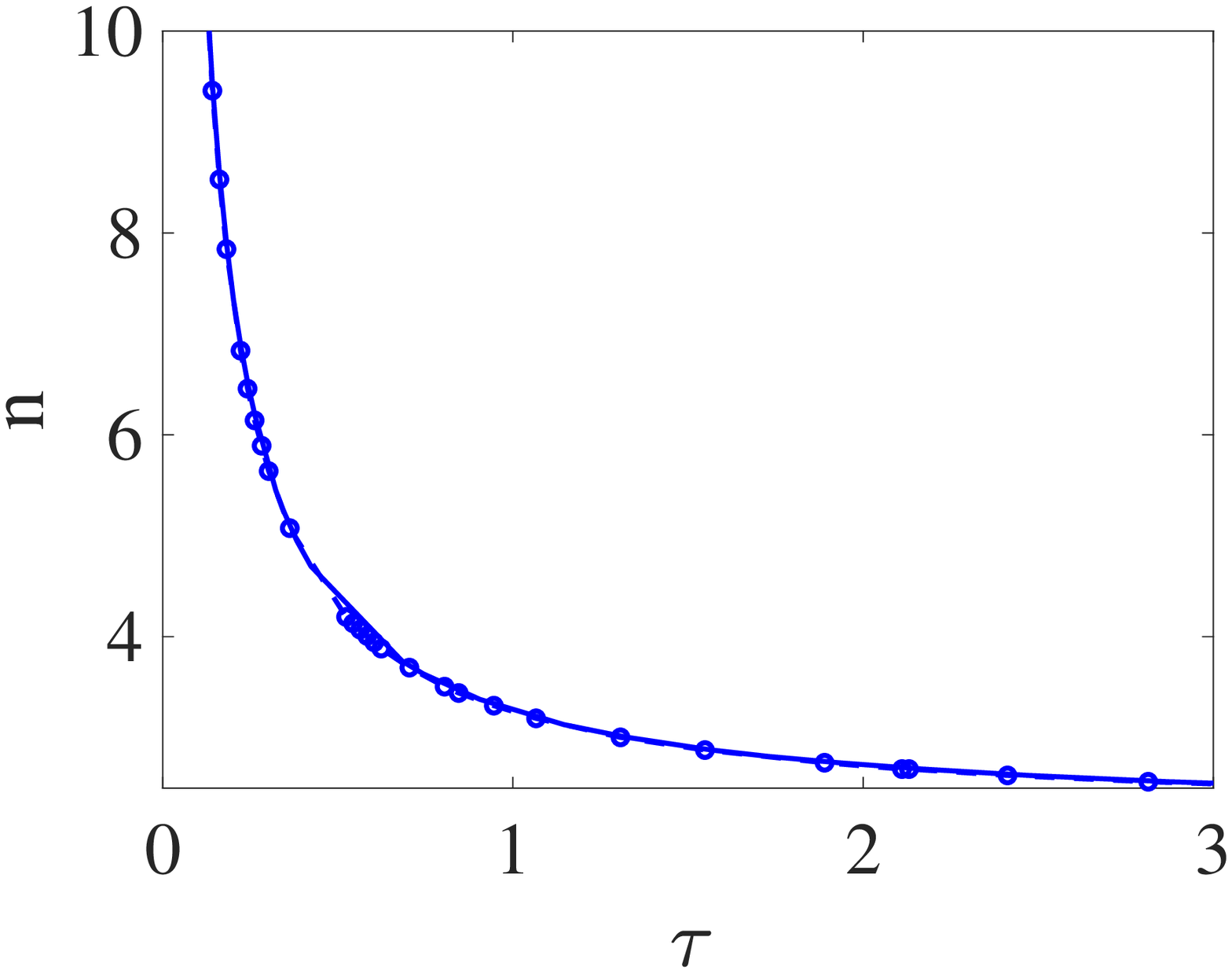}
  }
{\includegraphics[scale=0.25]{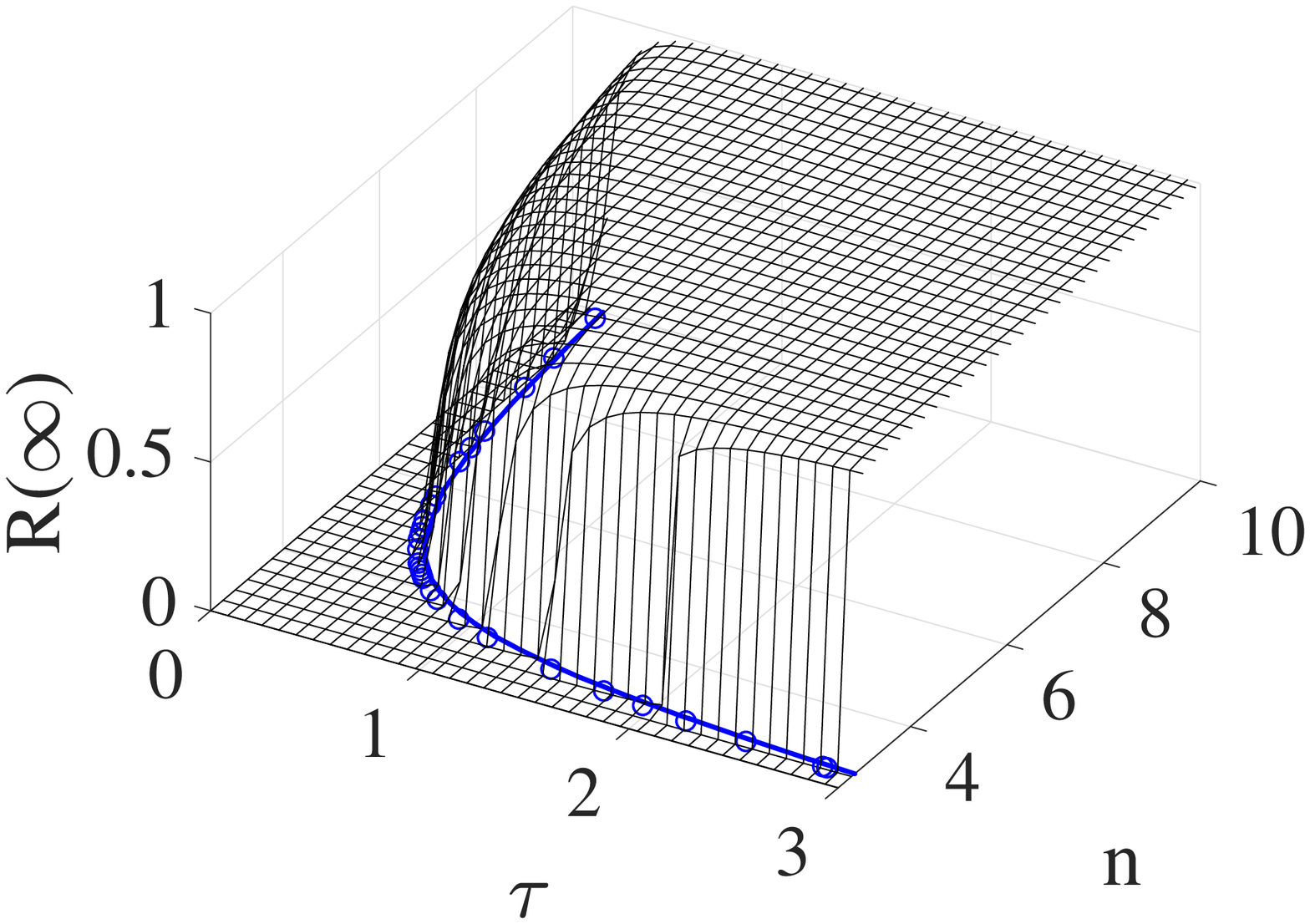}
        }
      \hspace{1.0cm}
 {\includegraphics[scale=0.25]{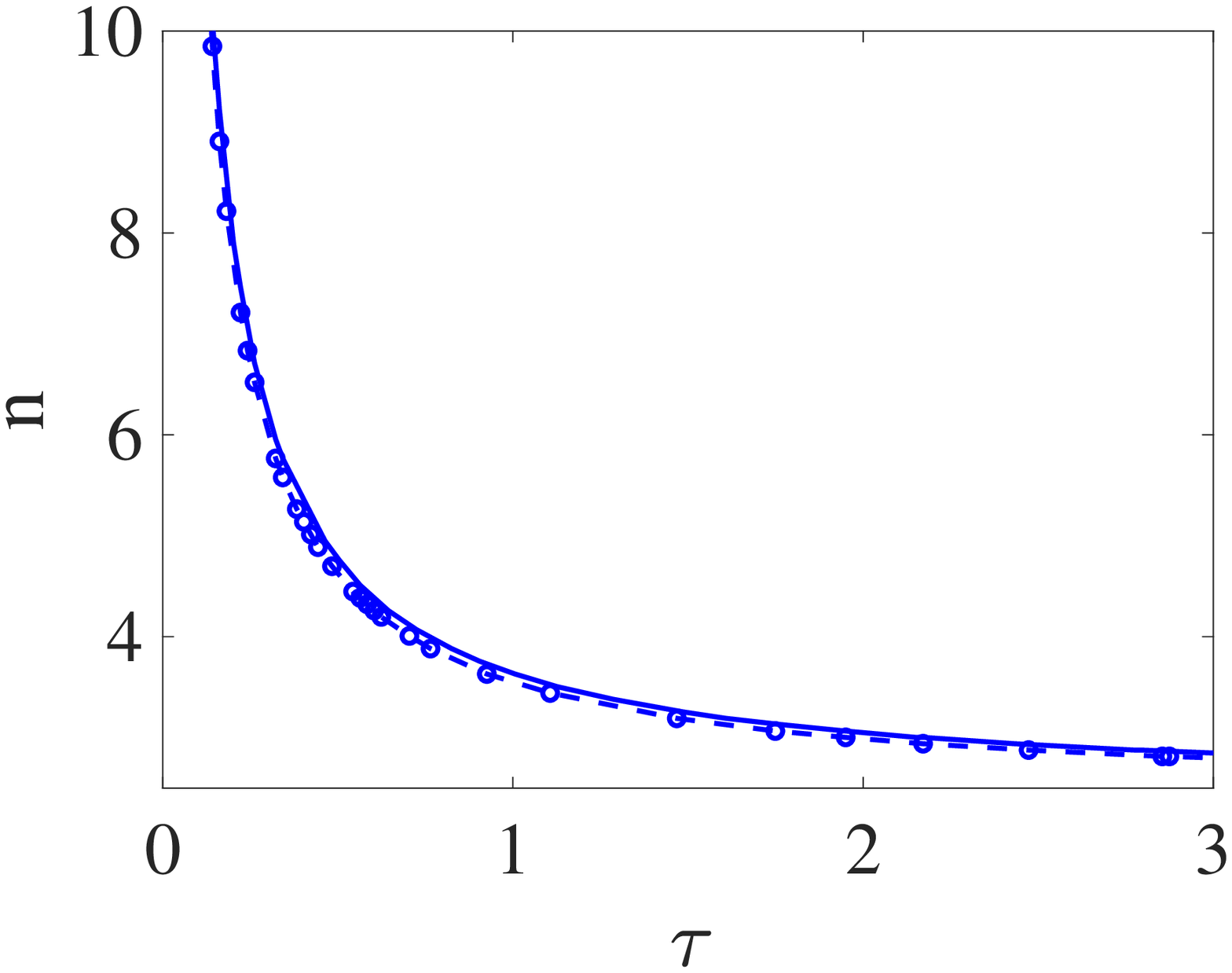}
        }
{\includegraphics[scale=0.25]{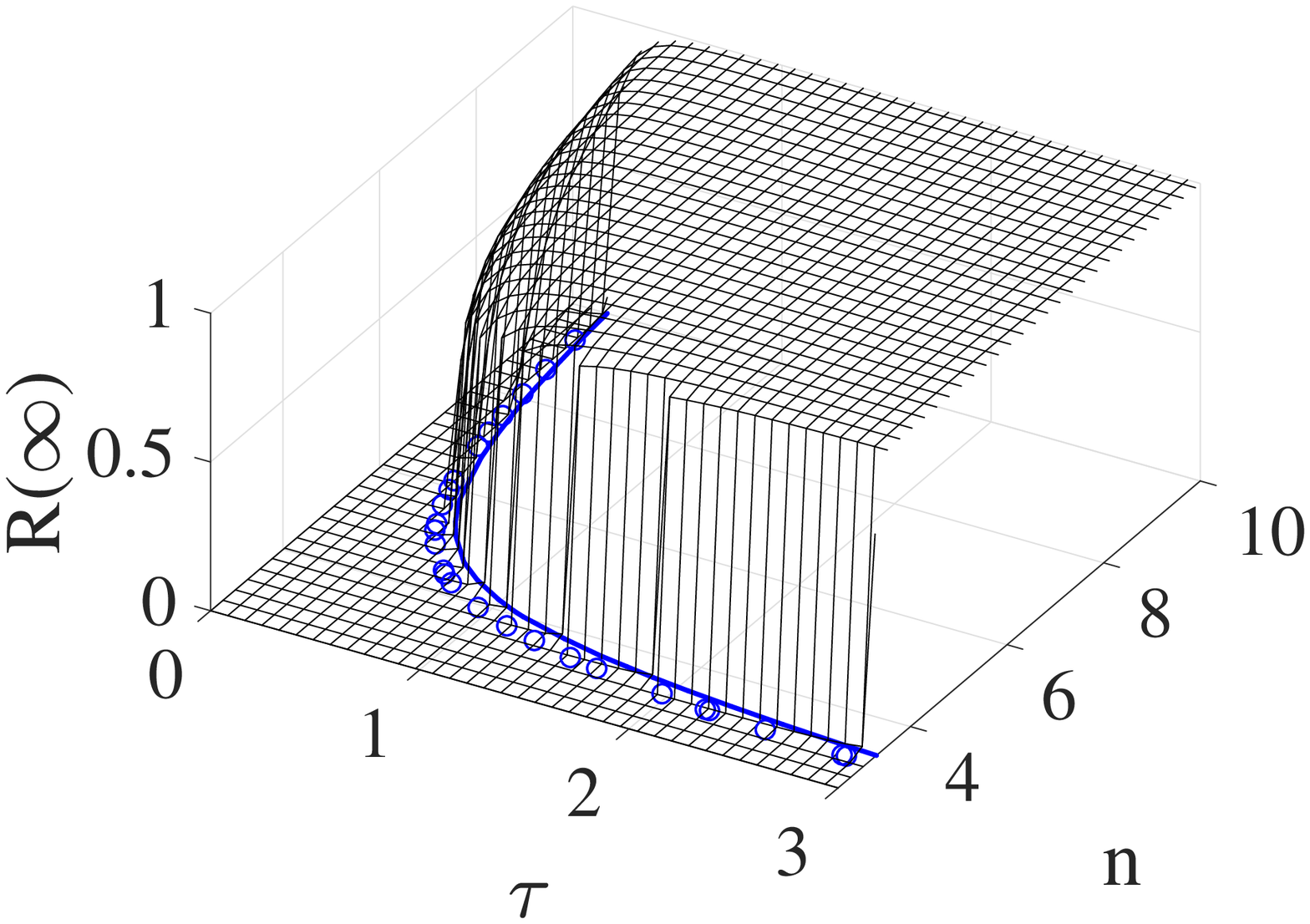}
        }
\hspace{1.0cm}
{\includegraphics[scale=0.25]{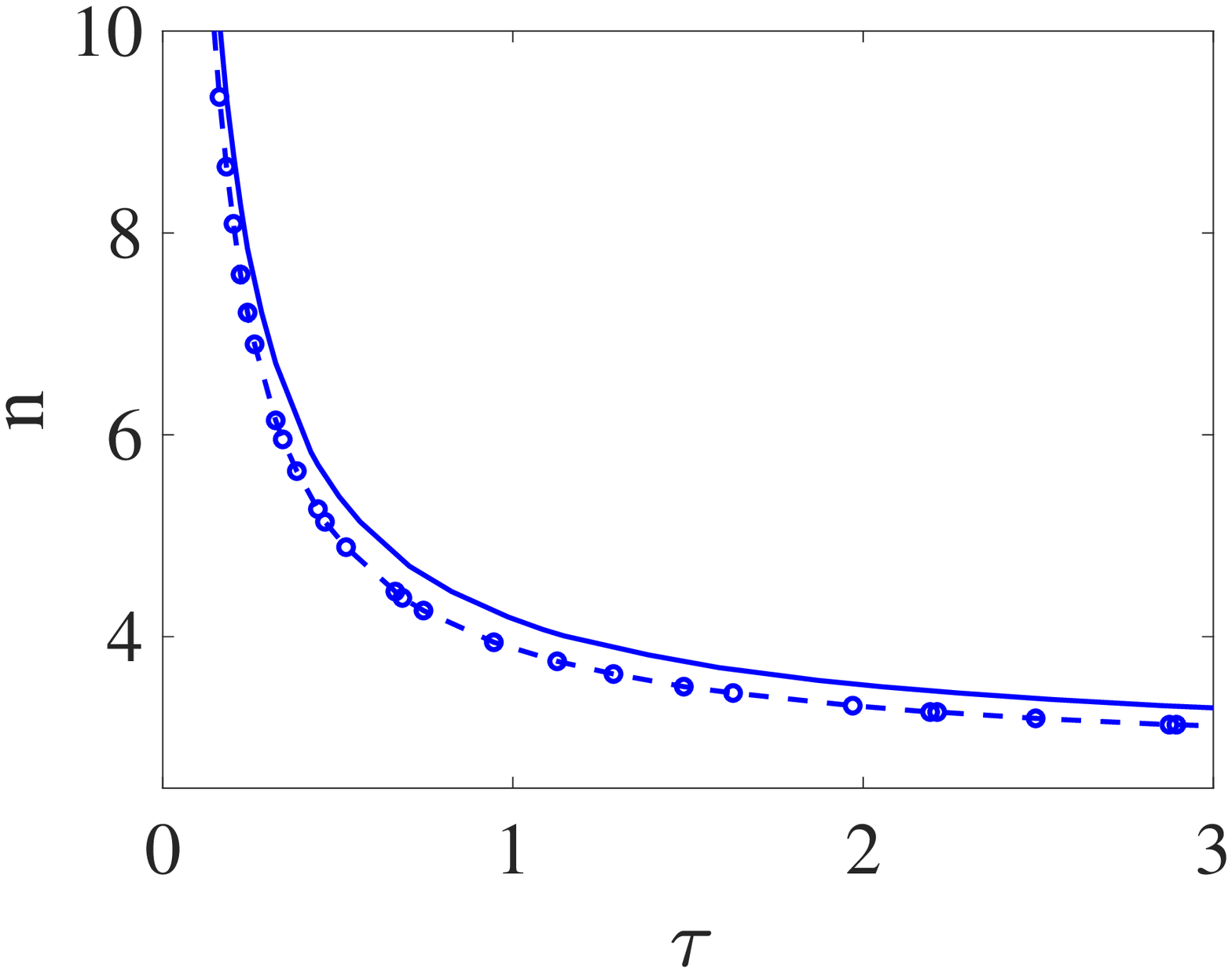}
        }
{\includegraphics[scale=0.25]{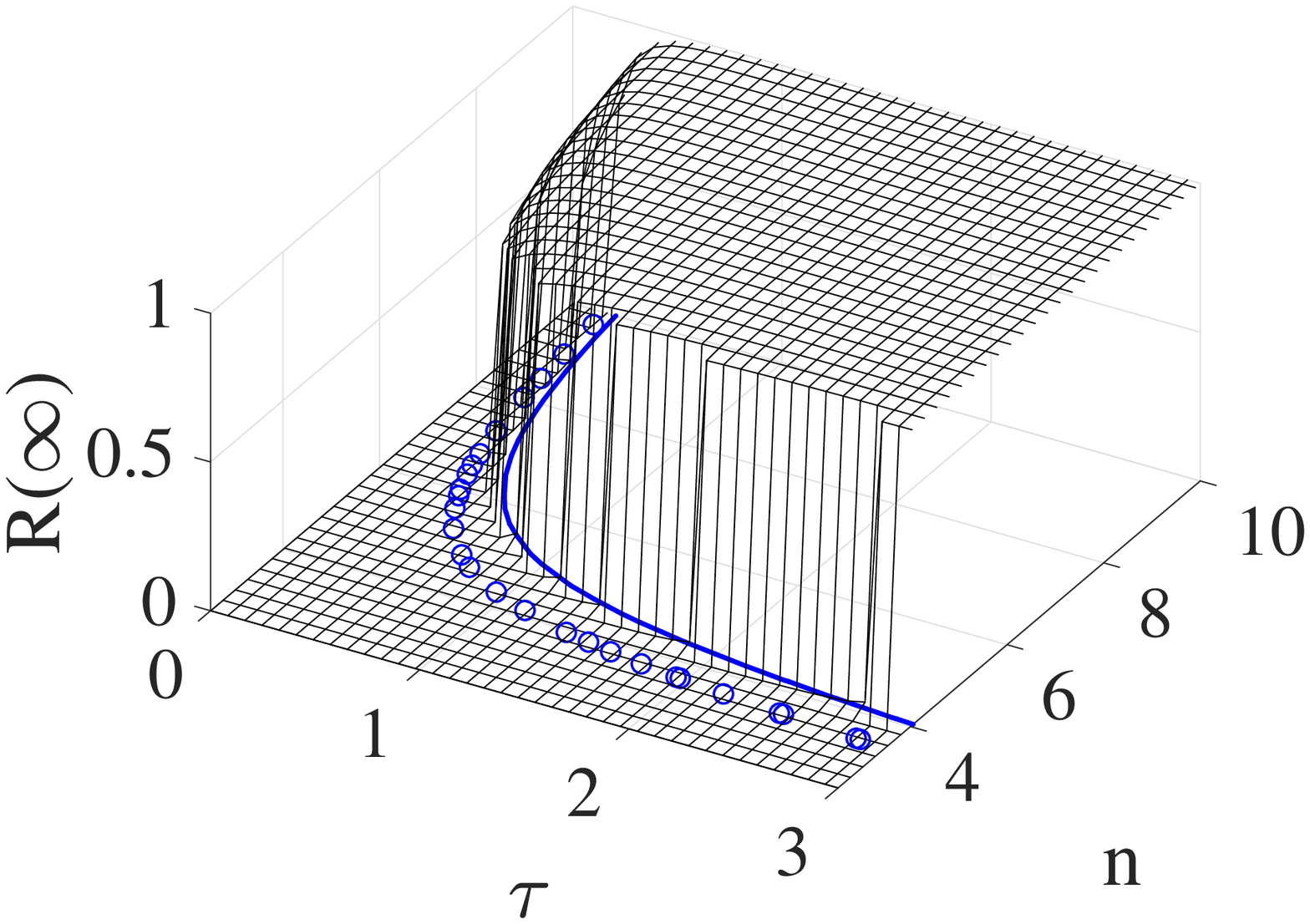}
        }
       \hspace{1.0cm}
{\includegraphics[scale=0.25]{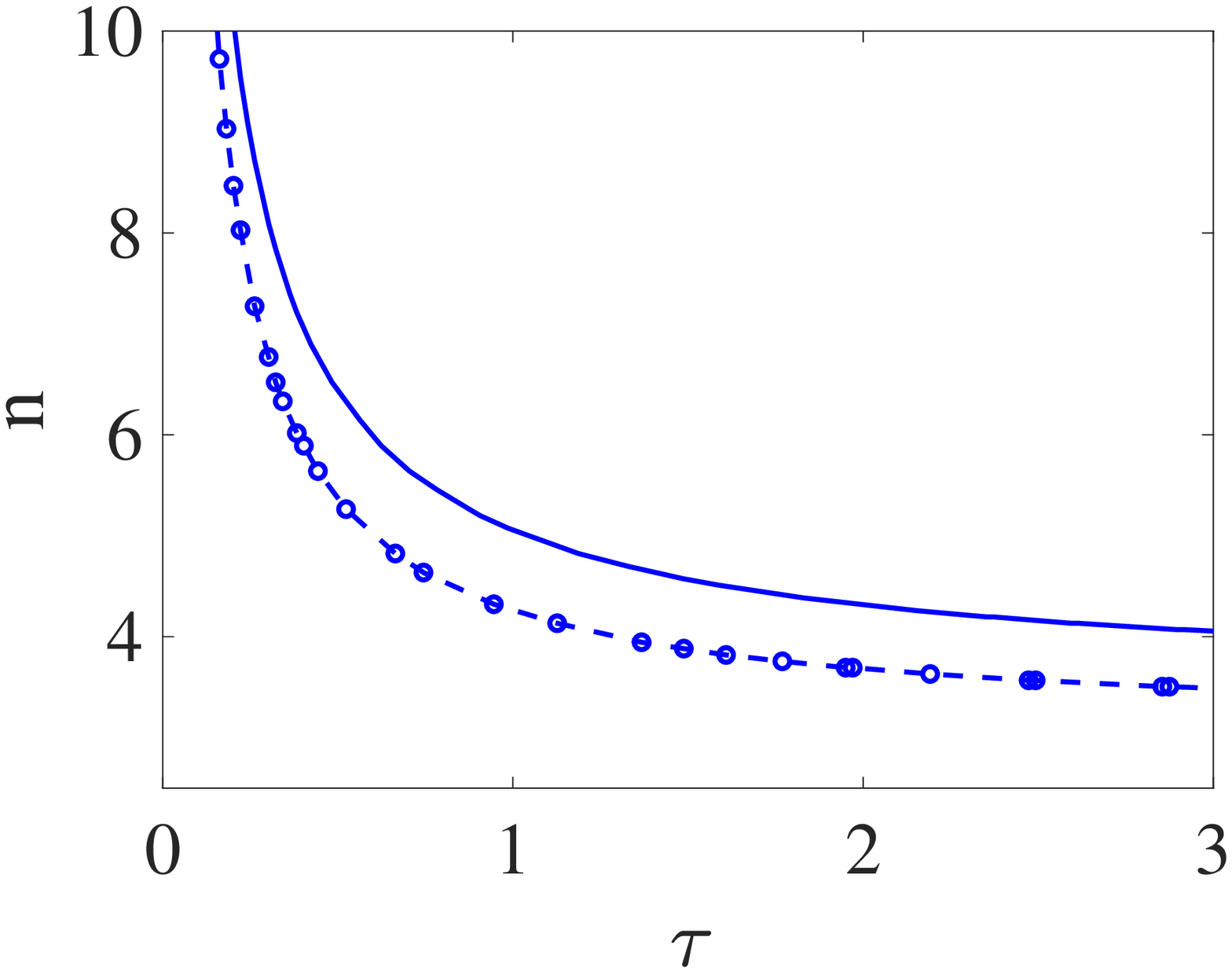}
        }
\caption{Assessing the validity of the epidemic threshold based on the asymptotic approximation \eqref{eq:approximation_for_alpha} (dashed line and markers - $\circ$) by comparing it to the epidemic threshold based on the numerical solution of the cubic equation \eqref{eq:cubic_eqn_for_alpha} (continuous lines). In the right hand column we compare both threshold curves in the $(\tau,n,0)$ plane. In the left hand column both curves are compared to the final epidemic size based on numerical integration of the pairwise model equations with the simple closure. Parameter values are $N=10000$, $\gamma=1$ and from top to bottom the clustering coefficients are $\phi=0, 0.15, 0.3, 0.45, 0.6$. 
}
  \label{fig:simp_clo_thres}
\end{figure}

\section{Results for the pairwise model with the compact improved closure} \label{section:Results_comp_imp}

Starting from the improved closure~\eqref{eq:improved_closure} but in line with Proposition~\ref{prop:inequ_closures}, we adapt the closure so that the term responsible for the approximation on the clustered part of the network does not consider variables, singles or pairs involving the recovered/removed class. This leads to the new closure
 \begin{equation} \label{eq:compact_improved_closure}
 [ASI]=(n-1)\left((1-\phi)\frac{[AS][SI]}{n[S]}+\phi\frac{[AS][SI][IA]}{[A]\left(\frac{[SS][SI]}{[S]}+\frac{[SI][II]}{[I]}\right)}\right),
 \end{equation}
which we refer to as the compact improved closure. Plugging equation \eqref{eq:compact_improved_closure} into the exact system~\eqref{equations:unclosed_pw_model_one}-\eqref{equations:unclosed_pw_model_end} leads to a self-consistent system that is written out in full in Appendix~\ref{sec:correlation_struct_and_fast_vars_reduced_improved_closure}. 

In line with our procedure so far, we aim to find the epidemic threshold of this new pairwise system with the compact improved closure. It turns out that the approach used for the pairwise system with the simple closure is applicable to this case, and the steps and results are summarised below.

\subsection{Fast variables with the compact improved closure}

As we have shown before, finding the threshold relies on finding the quasi-equilibrium of $\alpha=\frac{[SI]}{[I]}$. In Appendix~\ref{sec:correlation_struct_and_fast_vars_reduced_improved_closure} we show that this requires knowledge about the behaviour of the $\delta=\frac{[II]}{[I]}$ variable and indeed a system of differential equations involving these two variables can be derived. This system is given below \begin{eqnarray} \frac{d\alpha}{dt}&=&-\tau\alpha-\tau\alpha^{2}+\tau(n-1)\left((1-\phi)\alpha+\phi\alpha\left(\frac{n-\delta}{n+\delta}\right)\right), \label{eq:alpha_comp_imp}\\ \frac{d\delta}{dt}&=&2\tau\alpha-\gamma\delta+2\tau(n-1)\left(\frac{\phi\alpha\delta}{n+\delta}\right)-\tau\alpha\delta. \label{eq:delta_comp_imp}\end{eqnarray}
As previously, the steady state of this system is of interest and apart from the trivial $(\alpha^{*},\delta^{*})=(0,0)$ steady state, the quasi-equilibrium can be found by first expressing $\delta$ as a function of $\alpha$. This can be done by setting equation~\eqref{eq:alpha_comp_imp} equal to zero and rearranging, leading to \begin{equation} \alpha=(n-2)-(n-1)\phi\frac{2\delta}{n+\delta}. \label{eq:alpha_star} \end{equation} Plugging equation \eqref{eq:alpha_star} into equation~\eqref{eq:delta_comp_imp} and collecting powers of $\delta$ leads to the following cubic equation \begin{align} &(-A-B)\delta^{3}+(-n(n-2)-A^2-2nB)\delta^{2}\notag\\ &+(-n(n-2)A+2nA-n^2B)\delta+2n^2(n-2)=0,\label{eq:cubic_delta_comp_imp} \end{align} where $A=(n-2)-2\phi(n-1)$ and $B=\gamma/\tau$. It is worth noting that in this case it is easier to work with $\delta$, but any results can be converted in terms of $\alpha$ which is the main variable of interest. 

\subsection{Asymptotic expansion of the epidemic threshold}

As in Section \ref{sec:asymptotic_expansion_1}, we require the roots of the cubic polynomial given in equation \eqref{eq:cubic_delta_comp_imp}. To do so, we express $\delta$ as an asymptotic expansion in powers of $\phi$. We substitute \begin{equation} \delta=\delta_{0}+\delta_{1}\phi+\delta_{2}\phi^{2}+\cdots . \label{eq:expansion_for_delta} \end{equation} Plugging the expansion for $\delta$ \eqref{eq:expansion_for_delta} into equation \eqref{eq:cubic_delta_comp_imp} leads to \begin{dmath} (-A-B)(\delta_{0}+\delta_{1}\phi+\delta_{2}\phi^{2}+\cdots)^{3}+(-n(n-2)-A^{2}-2nB)(\delta_{0}+\delta_{1}\phi+\delta_{2}\phi^{2}+\cdots)^{2}+(-n(n-2)A+2nA-n^{2}B)(\delta_{0}+\delta_{1}\phi+\delta_{2}\phi^{2}+\cdots)+2n^{2}(n-2)=0. \label{eq:asymptotic_expansion_for_delta} \end{dmath} Alternatively, substituting \eqref{eq:alpha_star} into the differential equation for $\delta$ \eqref{eq:delta_comp_imp}, setting the expression equal to zero and rearranging leads to \begin{equation} \gamma\delta(n+\delta)^{2}=\tau[(n-2)(n+\delta)-2\phi(n-1)\delta][(2-\delta)(n+\delta)+2\phi(n-1)\delta]. \label{eq:alternative_expansion_for_delta} \end{equation} Substituting \eqref{eq:expansion_for_delta} into \eqref{eq:alternative_expansion_for_delta} and collecting terms of order $\phi^{0}$ yields \begin{eqnarray} \gamma\delta_{0}(n+\delta_{0})^{2}&=&\tau[(n-2)(n+\delta_{0})][(2-\delta_{0})(n+\delta_{0})] \\ \gamma\delta_{0}&=&\tau(n-2)(2-\delta_{0}) \\ \delta_{0}(\gamma+\tau(n-2))&=&2\tau(n-2) \\ \delta_{0}&=&\frac{2\tau(n-2)}{\gamma+\tau(n-2)}. \label{eq:expression_for_delta_0} \end{eqnarray} Following the same process to collect terms of order $\phi^{1}$, we find \begin{dmath} \gamma\delta_{1}[(n+\delta_{0})^{2}+2(n+\delta_{0})\delta_{0}]=\tau(n-2)(n+\delta_{0})[\delta_{1}(2-n-2\delta_{0})+2(n-1)\delta_{0}]+\tau(2-\delta_{0})(n+\delta_{0})[(n-2)\delta_{1}-2(n-1)\delta_{0}], \end{dmath} which can be rearranged to yield \begin{equation} \delta_{1}=\frac{2\tau(n-1)\delta_{0}(n-4+\delta_{0})}{\gamma(n+3\delta_{0})+\tau(n-2)(n+3\delta_{0}-4)} \label{eq:expression_for_delta_1}, \end{equation} with $\delta_{0}$ defined in \eqref{eq:expression_for_delta_0}. In summary, we have determined the first two coefficients $\delta_{0}$ and $\delta_{1}$ of the asymptotic expansion for $\delta$ given in equation \eqref{eq:expansion_for_delta}. Hence, the true solution is approximated by the following expression: \begin{dmath} \delta=\frac{2\tau(n-2)}{\gamma+\tau(n-2)}+\frac{2\tau(n-1)\delta_{0}(n-4+\delta_{0})\phi}{\gamma(n+3\delta_{0})+\tau(n-2)(n+3\delta_{0}-4)}+\cdots. \label{eq:final_expansion_for_delta} \end{dmath} Finally, we are able to plug \eqref{eq:final_expansion_for_delta} into the quasi-equilibrium point for $\alpha$, given in equation \eqref{eq:alpha_star}, to obtain \begin{equation} \alpha=(n-2)-2(n-1)\phi\frac{\delta_{0}}{n+\delta_{0}}+\mathcal{O}(\phi^{2}), \end{equation} which can be rearranged to find \begin{equation} \alpha=(n-2)-\phi\frac{4\tau(n-1)(n-2)}{\tau(n+2)(n-2)+\gamma n}. \label{eq:expression_for_alpha_2} \end{equation} The expression for $\alpha$ \eqref{eq:expression_for_alpha_2} can be used to determine the epidemic threshold as follows
\begin{equation}
R^{cci}=\frac{\tau\alpha}{\gamma}=\frac{(n-2)\tau}{\gamma}-\phi\frac{\tau}{\gamma}\left(\frac{4\tau(n-1)(n-2)}{\tau(n+2)(n-2)+\gamma n}\right).
\label{eq:threshold_cci}
\end{equation}
It is straightforward to see that again $R^{cci}\le R$, with clustering making the spread of the epidemic less likely.

\subsection{Numerical examples}

In Fig.~\ref{fig:comp_imp_clo_thres} we repeat the systematic test of comparing the epidemic threshold generated via the numerical solution of the cubic equation \eqref{eq:cubic_delta_comp_imp}, the epidemic threshold generated by the asymptotic expansion \eqref{eq:threshold_cci} and the numerical value of the final epidemic size predicted by pairwise model with the compact improved closure, over a wide range of $(\tau,n)$ values. Several observations can be made. First, it is clear that higher values of clustering push the location of threshold to higher $\tau$ and $n$ values, meaning that the limiting effect of clustering on the epidemic spread can only be overcome if either the value of the transmission rate or average degree increases. Second, the agreement between the threshold  based on the numerical solution of the cubic equation~\eqref{eq:cubic_delta_comp_imp} and the asymptotic expansion~\eqref{eq:final_expansion_for_delta} is excellent over a wide range of $\phi$ values. In fact, in this case the agreement is excellent for $0\le \phi \le 0.45$, with only small deviations even for $\phi=0.6$. The agreement between the numerical solution of the pairwise model and the threshold based on the numerical solution of the cubic 
equation \eqref{eq:cubic_delta_comp_imp} remains excellent across all parameter values.

\begin{figure}
\centering
{\includegraphics[scale=0.25]{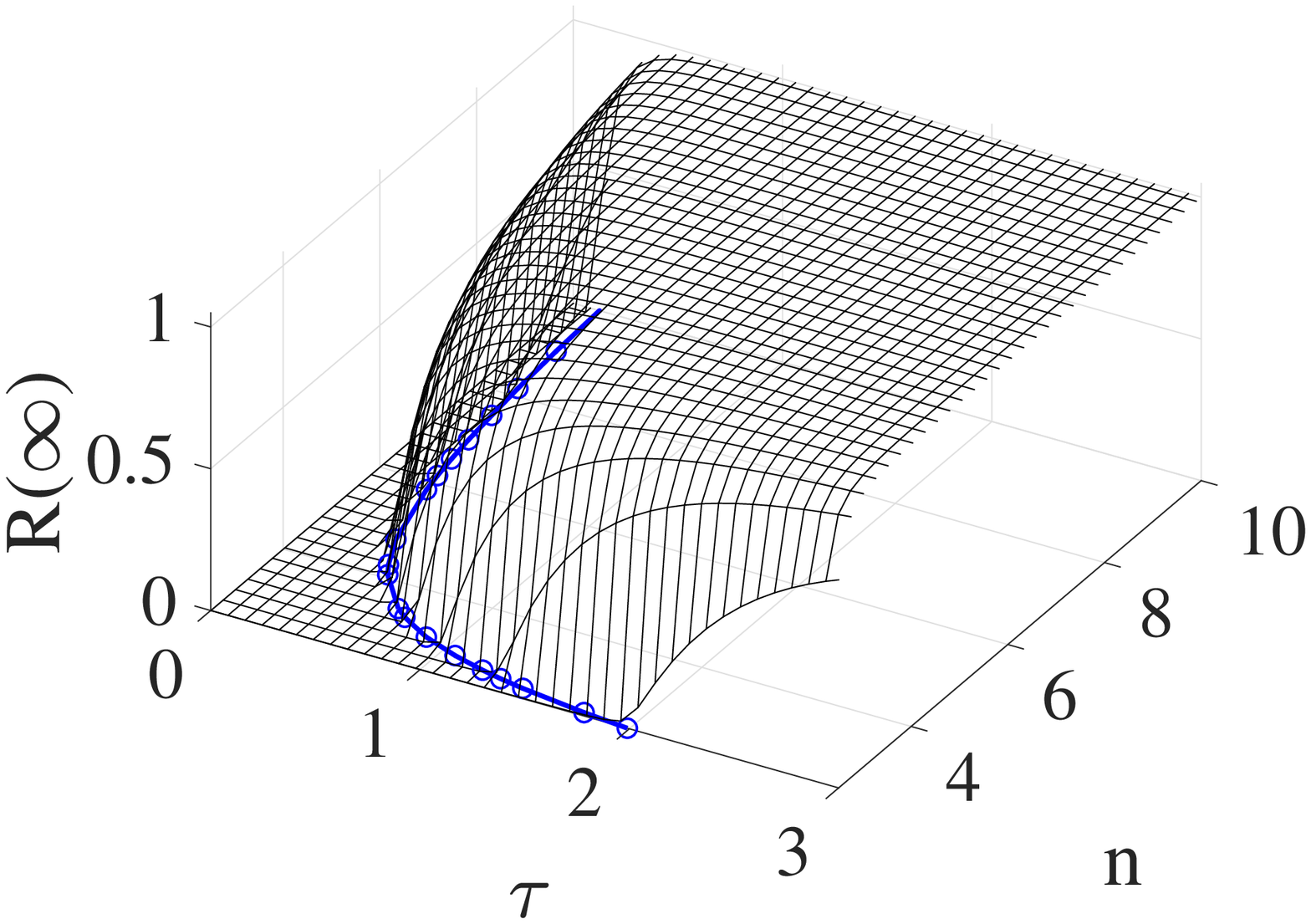}
    }
\hspace{1.0cm}
{\includegraphics[scale=0.25]{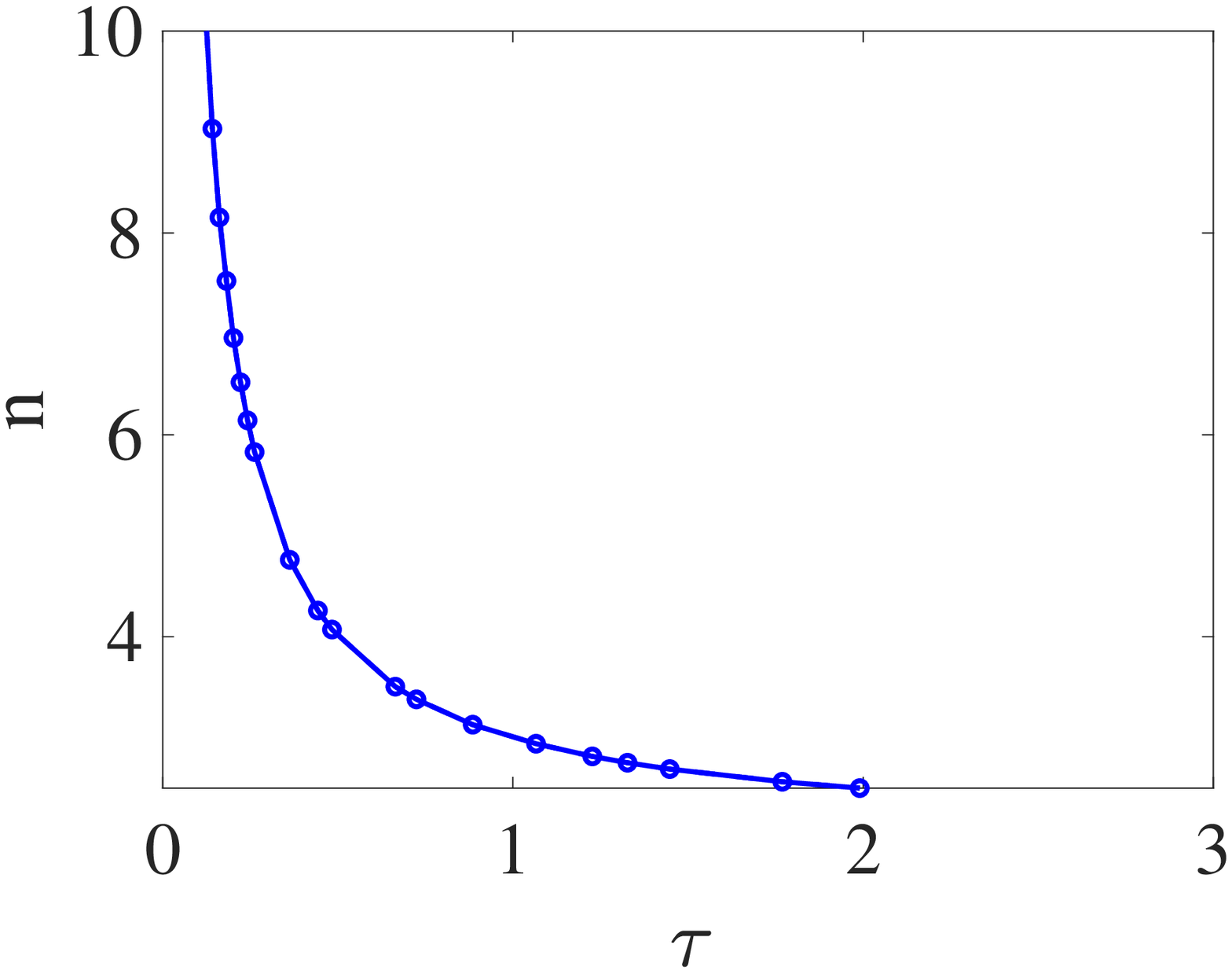}
    }
{\includegraphics[scale=0.25]{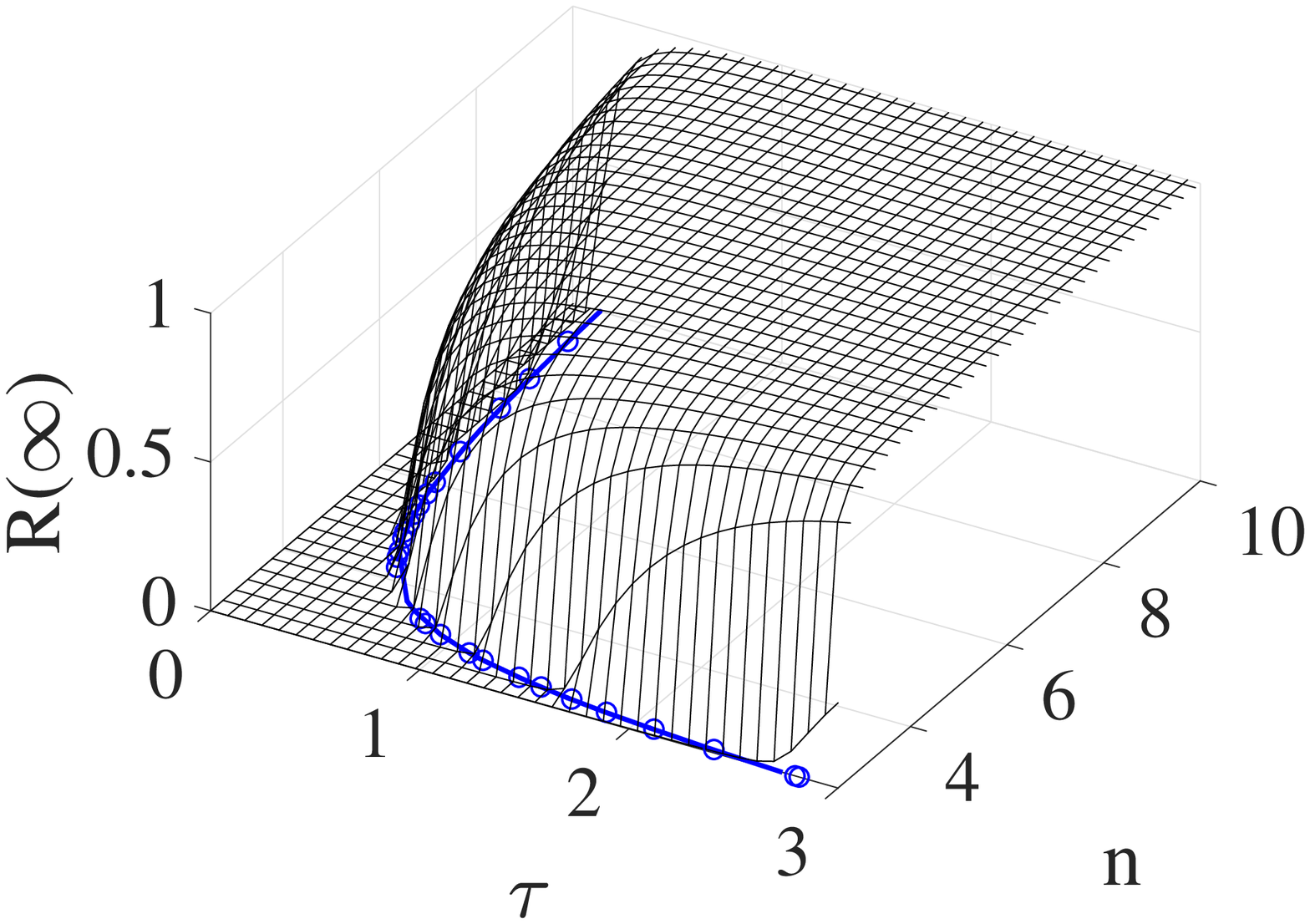}
        }
\hspace{1.0cm}
{\includegraphics[scale=0.25]{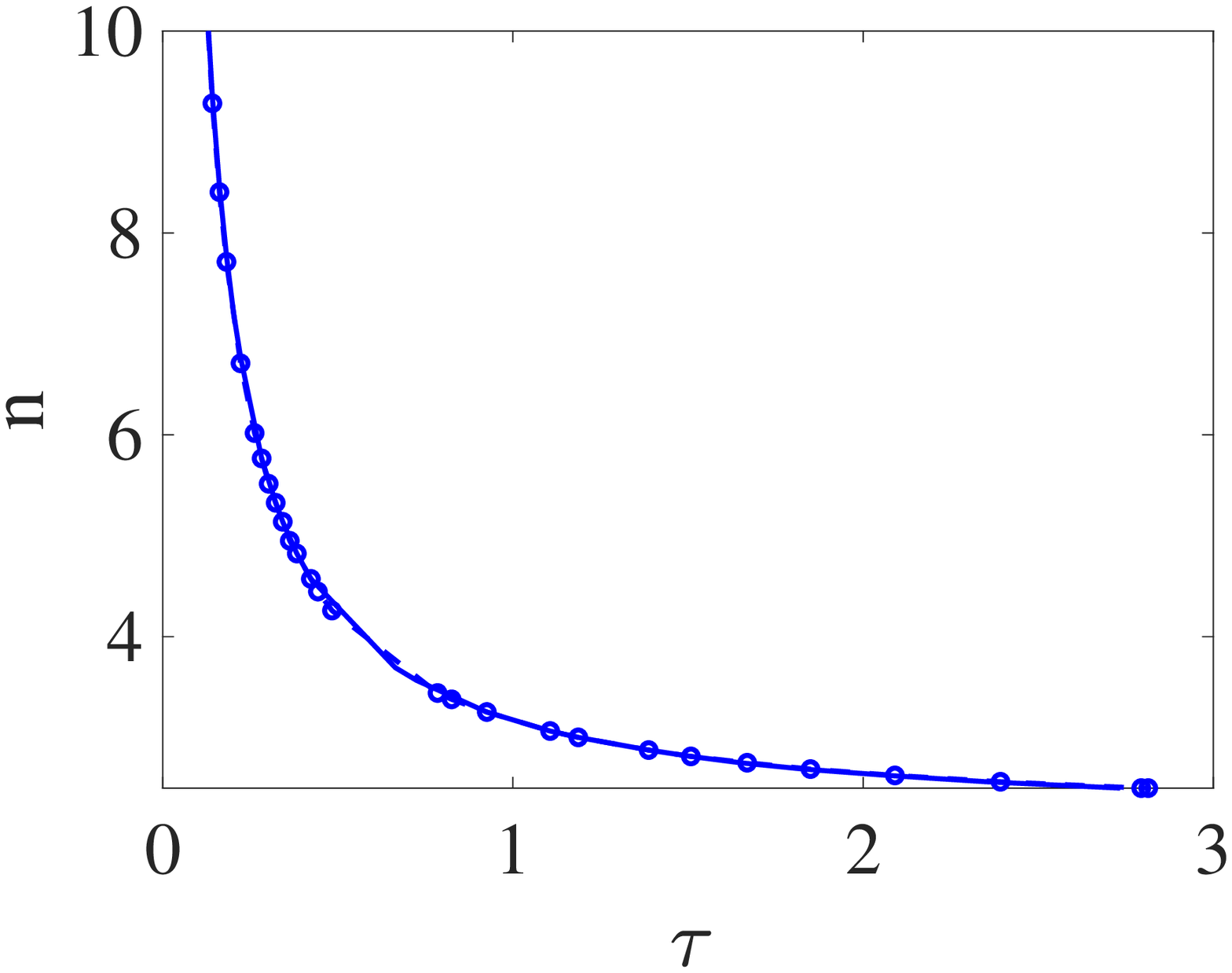}
  }
{\includegraphics[scale=0.25]{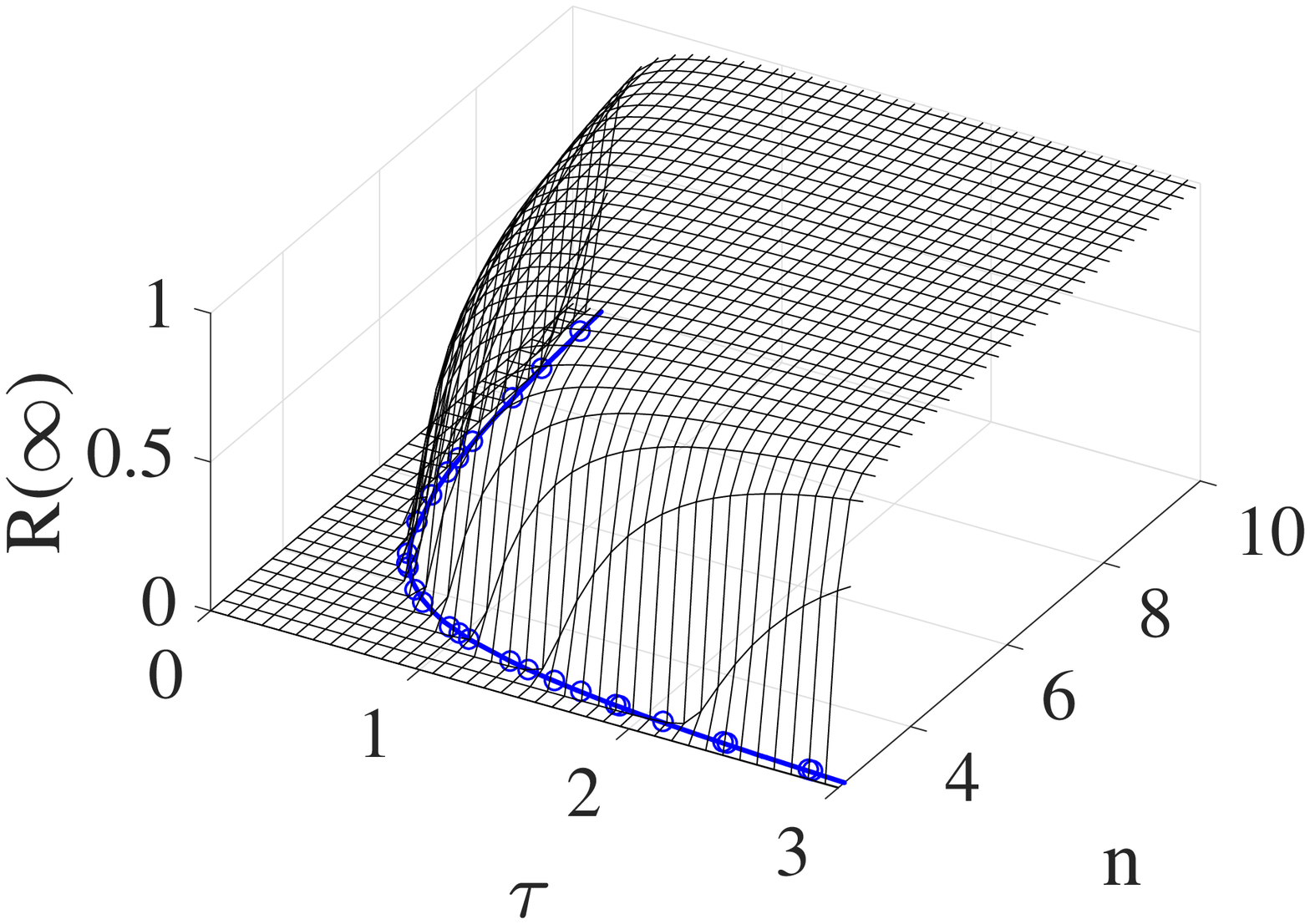}
        }
       \hspace{1.0cm}
 {\includegraphics[scale=0.25]{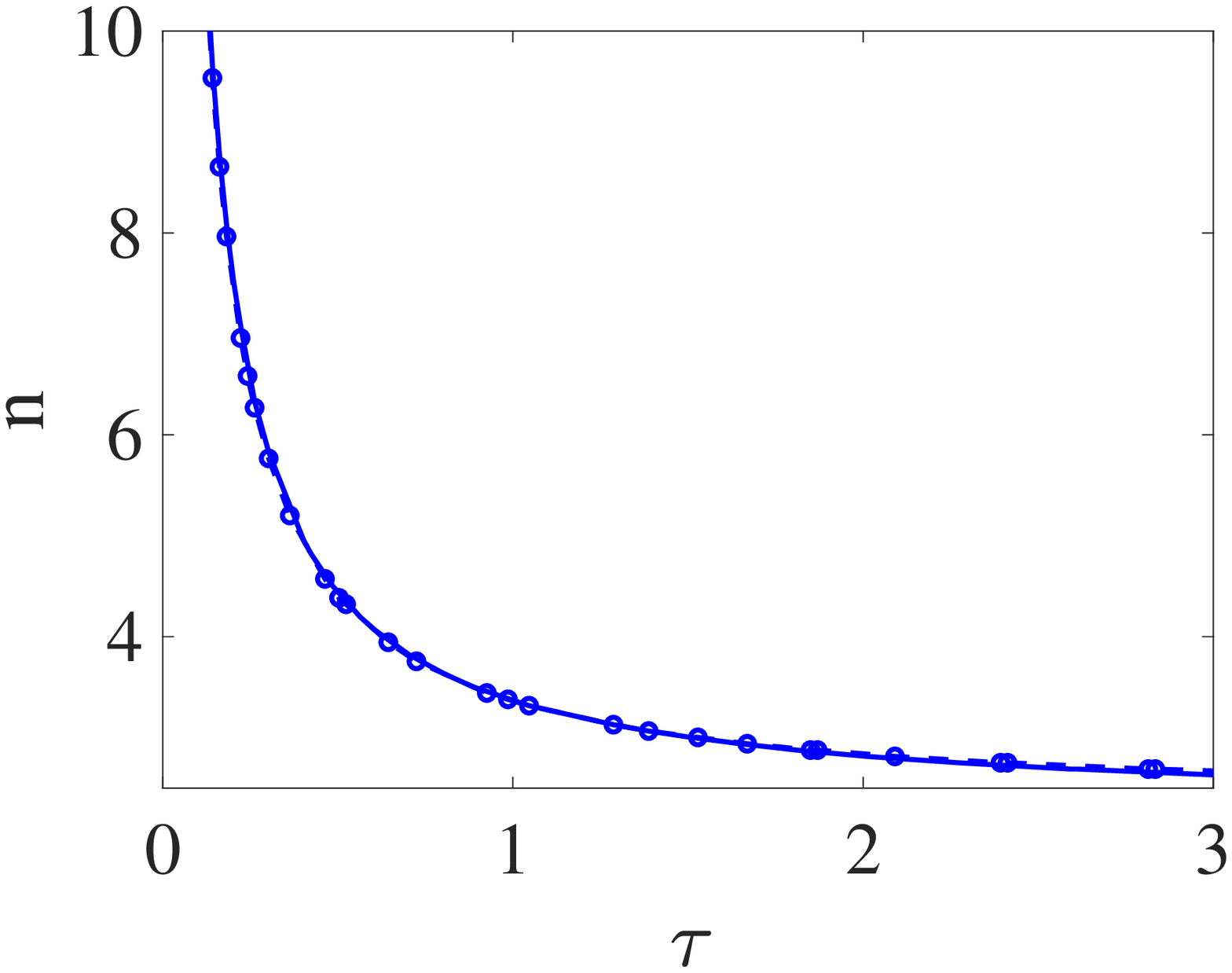}
        }
{\includegraphics[scale=0.25]{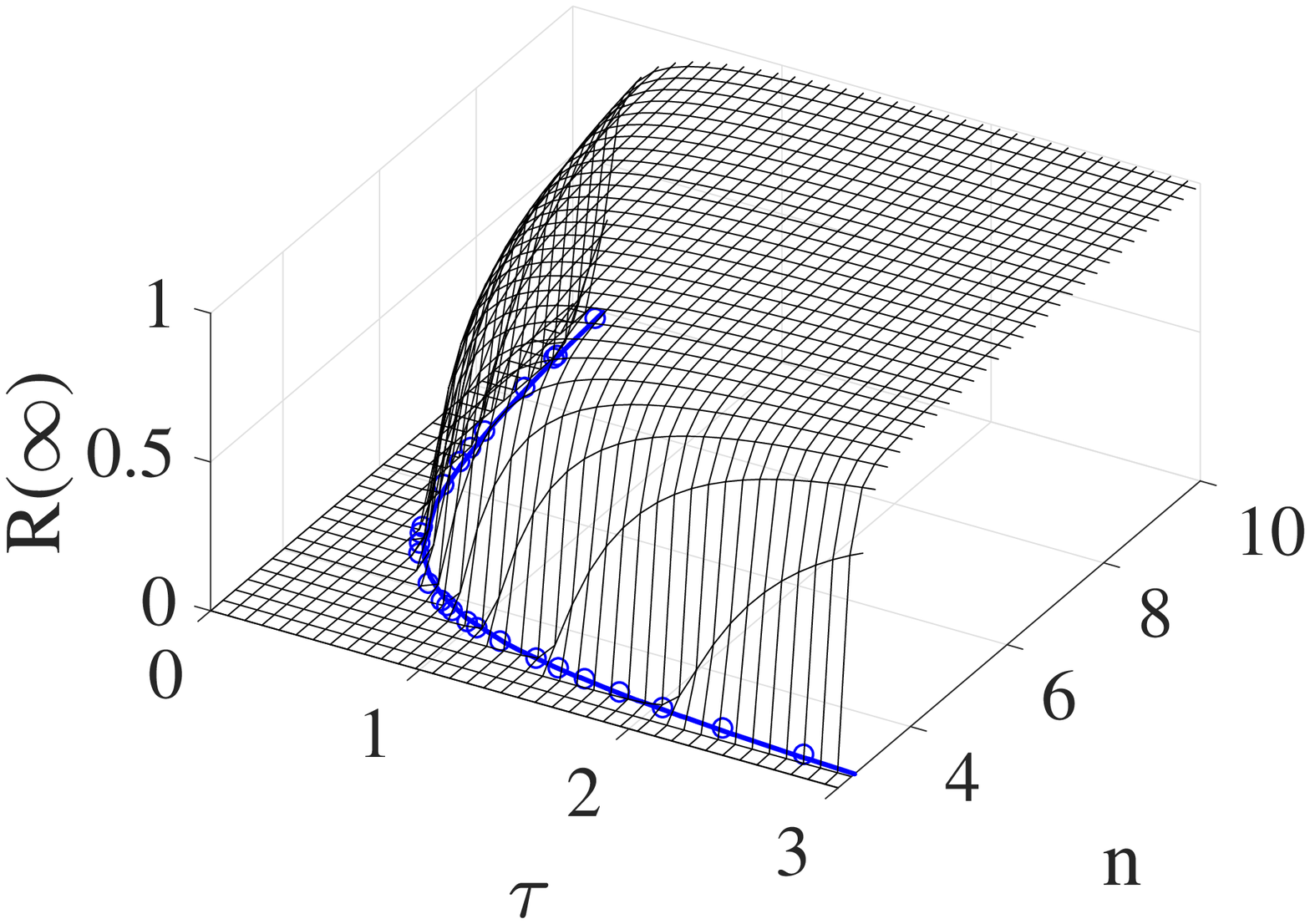}
        }
\hspace{1.0cm}
{\includegraphics[scale=0.25]{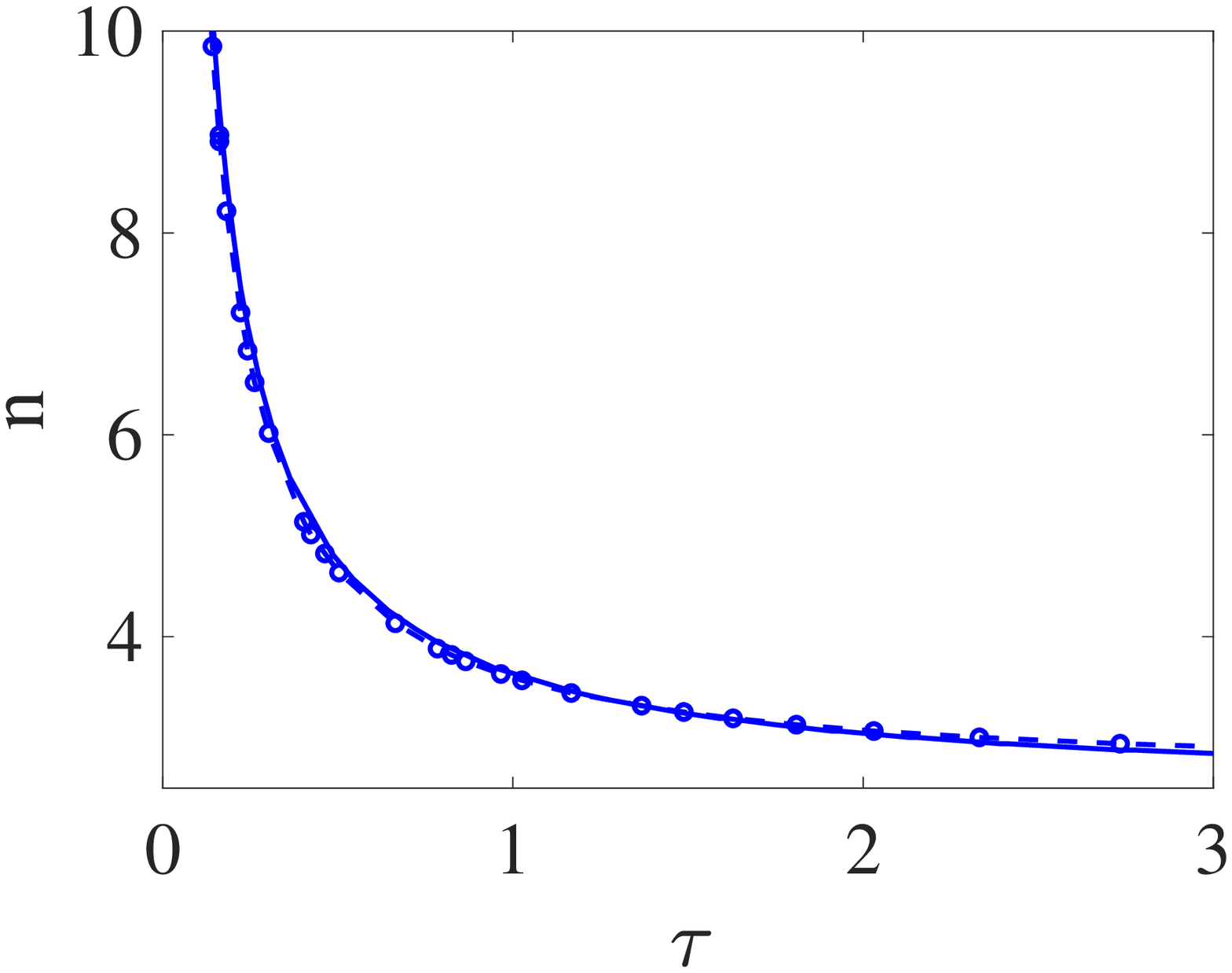}
        }
{\includegraphics[scale=0.25]{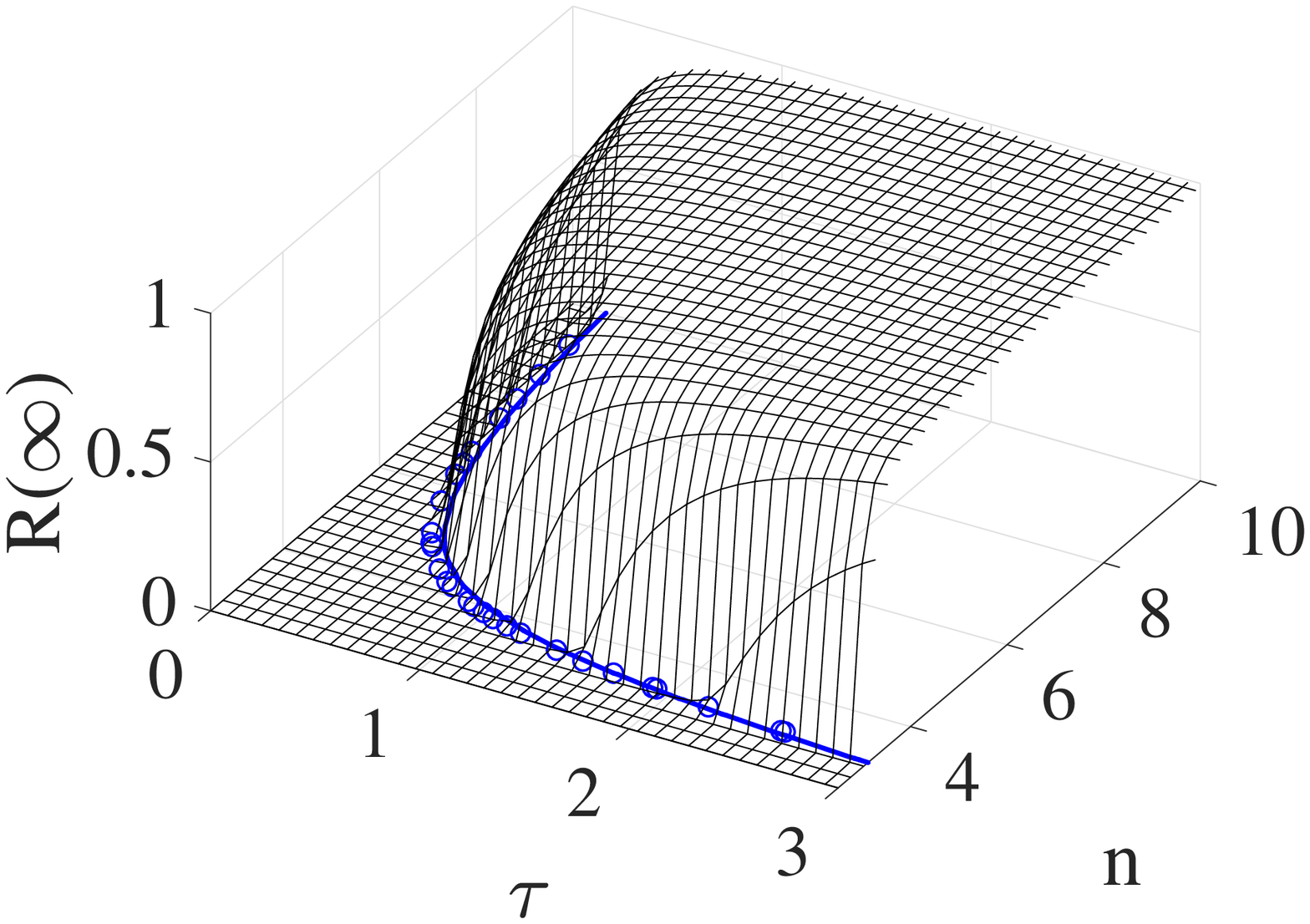}
        }
        \hspace{1.0cm}
{\includegraphics[scale=0.25]{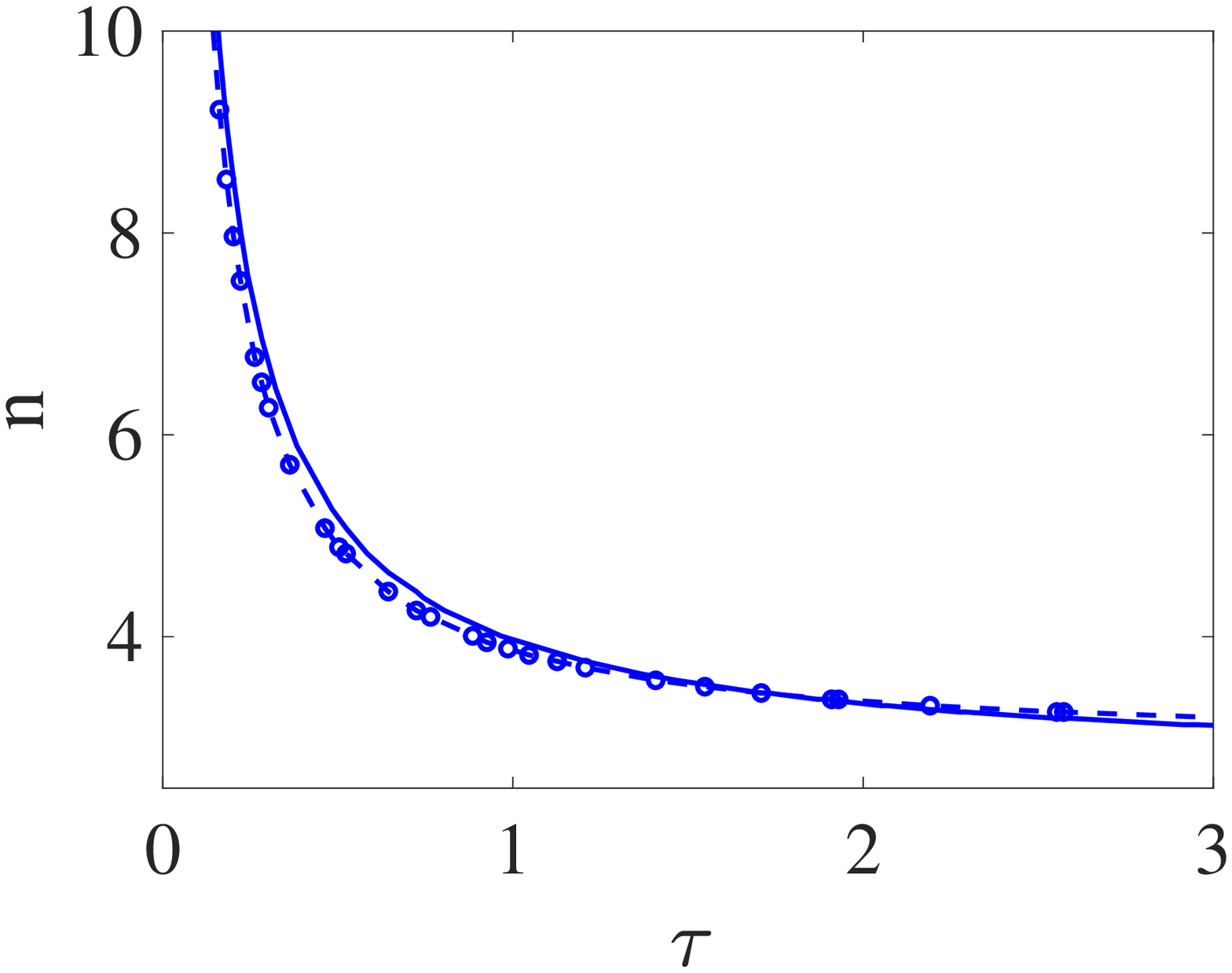}
        }
\caption{Assessing the validity of the epidemic threshold based on the asymptotic expansion \eqref{eq:final_expansion_for_delta} (dashed line and markers - $\circ$) by comparing it to the epidemic threshold based on the numerical solution of the cubic equation \eqref{eq:cubic_delta_comp_imp} (continuous lines). In the right hand column we compare both threshold curves in the $(\tau,n,0)$ plane. In the left hand column both curves are compared to the final epidemic size based on numerical integration of the pairwise model equations with the compact improved closure. Parameter values are $N=10000$, $\gamma=1$ and from top to bottom the clustering coefficients are $\phi=0, 0.15, 0.3, 0.45, 0.6$.
}
  \label{fig:comp_imp_clo_thres}
\end{figure}

\section{Discussion}
\label{sec:discussion}

In this paper we set out to obtain an analytic epidemic threshold using pairwise models but for clustered networks. For the unclustered case this problem has been solved previously~\cite{keeling1999effects}. Furthermore, in~\cite{keeling1999effects} it was shown that one way to approach the computation of the threshold is to exploit the presence of fast variables. In particular, working out the quasi-steady state of the fast variables allowed the authors to determine the epidemic threshold analytically. However, this was done only for the case when the network is unclustered. Here, we went one step further and showed that the quasi-equilibrium can be found as an asymptotic expansion in powers of the clustering coefficient. Prior to this new result we re-derived known closures by providing extra intuition for the assumptions underlying them as well as for the motivation for deriving them. 

Exploiting the presence of fast variables and combining this with elements of perturbation theory allowed us to compute the epidemic threshold for the pairwise model with two different closures that take clustering into account. Our results are in line with the findings of \cite{li2018epidemic} and \cite{miller2009spread}. In~\cite{li2018epidemic}, the epidemic threshold in a pairwise model for clustered networks with closure based on the number of links in a motif, rather than nodes, was calculated as
\begin{equation}
R_0=\frac{(n-1)\tau}{\tau+\gamma+\tau\phi}. \label{eq:disc_threshold_example}
\end{equation}
Equation \eqref{eq:disc_threshold_example} can be expanded in terms of $\phi$ to give 
\begin{equation}
R_0=\frac{(n-1)\tau}{\tau+\gamma}\left(\frac{1}{1+\phi\frac{\tau}{\tau+\gamma}}\right)\simeq \frac{(n-1)\tau}{\tau+\gamma}\left( 1-\phi \frac{\tau}{\tau+\gamma}+\cdots\right),
\label{eq:motif_based_threshold_expanded}
\end{equation}
which again reflects our finding that clustering reduces the epidemic threshold. 

Similarly but for clustered networks with heterogeneous degree distributions, in~\cite{miller2009spread} it was found that
\begin{equation}
R_0=\frac{\langle k^2-k\rangle}{\langle k\rangle}T-\frac{2\langle n_{\triangle}\rangle}{\langle k\rangle}T^2+\cdots,
\end{equation}
where $\langle k^i \rangle$ stands for the $i$th moment of the degree distribution, $T$ is the probability of infection spreading across a link connecting an infected to a susceptible node and $\langle n_{\triangle}\rangle$ denotes the average number of triangles that a node belongs to. The expression above again shows that clustering reduces the epidemic threshold when compared to the unclustered case. Furthermore, if the network is regular and we assume that infections and recoveries are Markovian processes with rates $\tau$ and $\gamma$ respectively, giving $T=\tau/(\tau+\gamma)$, $R_0$ above reduces to
\begin{equation}
R_0=\frac{\tau (n-1)}{\tau+\gamma}-(n-1)\phi\left(\frac{\tau}{\tau+\gamma}\right)^2+\cdots,
\end{equation}
where we have used the fact that a global clustering coefficient of $\phi$ translates to a node on average being part of $\frac{1}{2}n(n-1)\phi$ uniquely counted triangles. This in turn coincides with equation~\eqref{eq:motif_based_threshold_expanded}, and this is perhaps unexpected since the first expression was obtained based on a new type of closure for pairwise models while the other expression was based on percolation theory type arguments. In~\cite{trapman2007analytical}, specific networks with household structure were used to investigate the effects of clustering and infectious period distribution on a modified version of $R_0$ referred to as $R_{*}$, and lower and upper bounds for the value of this quantity were found.

Our analysis confirms that clustering starves the spreading epidemic of susceptible neighbours and that the epidemic is less likely to spread if the networks are clustered, all other parameters being equal. More importantly, the epidemic threshold is model-dependent and the pairwise model with the compact improved closure leads more readily to epidemic outbreaks when compared to the pairwise model with the simple closure, see Figs.~\ref{fig:simp_clo_thres}-\ref{fig:comp_imp_clo_thres}. While this ordering is true for the parameters used in this paper, it is easy to show that this relation can change if  parameters are tuned accordingly. For example, looking at the limit of $\gamma \rightarrow 0$ (or $\tau/\gamma$ large limit), the two epidemic thresholds are the same if 
\begin{equation}
\frac{2(n-1)}{n^{2}}\left(\frac{2(n-1)(n-2)}{(n-2)}\right)=\frac{4(n-1)(n-2)}{(n-2)(n+2)}.
\end{equation}
After some simple algebra this reduces to $n=2$. Hence, if the $\tau/\gamma$ ratio is large we will essentially have that (i) if $n>2$ then $R_{0}^{c}<R_{0}^{cci}$, and (ii) if $n<2$ then $R_{0}^{c}>R_{0}^{cci}$. This highlights the difficulty of determining the epidemic threshold and emphasises the importance of model choice when modelling real-world epidemics.


The analysis of the pairwise model with the full improved closure is still outstanding and will be the subject of a separate research paper. In this case, we expect that additional fast-variables need to be identified. Intuition tells us that $\frac{[SI]}{[I]}$ and $\frac{[II]}{[I]}$ may need to be extended to include $\frac{[SR]}{[R]}$ and $\frac{[RI]}{[I]}$.

The computation of the true $R_0$ for pairwise models can be attempted by considering the next generation matrix approach~\cite{van2002reproduction}. Looking at the pairwise model with the simplest closure and ordering the variables involved in the spreading process as: $[I]$,$[SI]$, the generation of new infectious cases at the the disease-free steady state is given by
\begin{equation}
F = 
 \begin{pmatrix}
  0 & \tau \\
  0 & \tau(n-1)(1-\phi)+\tau\xi\phi\alpha
 \end{pmatrix},
\end{equation}
where the lower right term is obtained from equation~\eqref{eq:pw_clust_simple_closure_3} by looking at the rate of growth of $[SI]$ in terms of $[SI]$ itself, that is
$$\dot{[SI]}=+\tau\xi\frac{[SS]}{[S]}\left((1-\phi)+\phi\frac{N}{n}\frac{[SI]}{[I]}\right)[SI]\simeq \left(\tau(n-1)(1-\phi)+\tau\xi\phi\alpha\right)[SI].$$
Now all other transfers between compartments are summarised in the $V$ matrix, which is given below
\begin{equation}
V = 
 \begin{pmatrix}
  \gamma & 0 \\
  0 & (\tau+\gamma)+\tau\frac{\xi}{n}\alpha\delta\phi
 \end{pmatrix},
\end{equation}
where the lower right term describes the rate at which $[SI]$ pairs move out of this compartment. This is obtained from  equation~\eqref{eq:pw_clust_simple_closure_3} by looking at the rate at which $[SI]$ pairs are depleted as shown below
$$\dot{[SI]}=-\left((\tau+\gamma)+\tau\xi\frac{[SI]}{[S]}(1-\phi)+\phi\tau\xi\frac{[SI]}{[S]}\frac{N[II]}{n[I]^{2}}\right)[SI]\simeq -((\tau+\gamma)+\tau\frac{\xi}{n}\alpha\delta\phi)[SI].$$
Now $R_0$ is given by the leading eigenvalue of $FV^{-1}$, which turns out to be
\begin{equation}
R_0=\frac{\tau n (n-1)-\tau(n-1)(n-\alpha)\phi}{n(\tau+\gamma)+\tau\xi\alpha\delta\phi}.
\label{eq:fv_R_0}
\end{equation}
Obviously, this seems like a rather complicated expression since the quasi-equilibrium values for $\alpha$ and $\delta$ are needed. These are only available as asymptotic expansions in powers of $\phi$. Nevertheless, for $\phi=0$, $R_0=\frac{\tau(n-1)}{\tau+\gamma}$, which agrees perfectly with the two results quoted above. Now writing $R_0=r_0+\phi r_1$, $\alpha=\alpha_0+\phi \alpha_1$ and $\delta=\delta_0+\phi \delta_1$, and plugging these into equation~\eqref{eq:fv_R_0}, leads to
$$r_0=\frac{\tau(n-1)}{\tau+\gamma}\,\,\,\text{and}\,\,\, r_1=-\frac{\tau^2(n-1)}{(\tau+\gamma)^2}\left[\frac{2(\tau+\gamma)}{n\tau}+\frac{(n-1)}{n}\alpha_0\delta_0\right].$$
While the first term in the expansion for $R_0$ agrees with the results quoted above, the second term seems less unlikely to be equivalent to those shown above. This same approach can be used to compute $R_0$ when the compact improved closure is used. We believe that comparing these different ways of computing the epidemic threshold can contribute to reconciling different methods and will lead to more clarity and transparency between various modelling approaches.     

The ODE systems for the fast variables are worth investigating in more detail. We expect that these systems will exhibit a number of steady states, some stable and some unstable. Namely, we expect the quasi-steady states to be unstable and the trivial zero steady states to be stable. However, numerical solutions of the cubic polynomials show that other equilibria exist. It will also be worthwhile to compare different models in order to identify the impact of clustering on epidemics by mapping out regions in the parameter space where its effect is strongest. It is known that when the network is dense the effect of clustering is limited and the same holds when the transmission/recovery rates are high/low, respectively. Of course there remains the issue of accounting for degree heterogeneity and this has been addressed to some extent by using percolation type approaches. Using the kind of approach that we presented in this paper may be extended to degree-heterogeneous clustered networks, but this will require more sophisticated models such as effective-degree, or compact/super-compact pairwise models~\cite{simon2015super}. These will no doubt lead to more complex systems which are more challenging to analyse. However, we hope that the results of this paper may encourage other researchers to consider and tackle the challenges posed by modelling epidemic dynamics on clustered networks with heterogeneous degree distributions. Finally, it would be worthwhile to test our findings against explicit stochastic network simulations. This was beyond the scope of the present work, whose focus was on exploiting the presence of fast variables and the use of perturbation analysis to determine the epidemic threshold analytically.


\section*{Acknowledgments}

Rosanna C Barnard acknowledges funding for her PhD studies from the Engineering and Physical Sciences Research Council (EP/M506667/1). P\'eter L. Simon acknowledges support from Hungarian Scientific Research Fund, OTKA, (grant no. 115926).

\appendix

\section{Derivation of evolution equations for the fast variables with simple closure} \label{sec:correlation_struct_and_fast_vars_appendix}

We begin by computing differential equations for the variables $\alpha:=\frac{[SI]}{[I]}$ and $\delta:=\frac{[II]}{[I]}$. Differentiating $\alpha=\frac{[SI]}{[I]}$ gives \begin{eqnarray*} \frac{d\alpha}{dt}&=&\frac{[SI]'[I]-[SI][I]'}{[I]^{2}} \\ &=&\frac{[SI]'}{[I]}-\frac{[SI][I]'}{[I]^{2}}, \end{eqnarray*} and substituting $[SI]'$ from equation \eqref{eq:pw_clust_simple_closure_3} and $[I]'$ from equation \eqref{eq:pw_clust_simple_closure_2}, we obtain \begin{dmath*} \frac{d\alpha}{dt}=-(\tau+\gamma)\frac{[SI]}{[I]}+\tau\xi\frac{[SS][SI]}{[S][I]}\left((1-\phi)+\phi\frac{N[SI]}{n[S][I]}\right)-\tau\xi\frac{[SI]^{2}}{[S][I]}\left((1-\phi)+\phi\frac{N[II]}{n[I]^{2}}\right)-\tau\frac{[SI]^{2}}{[I]^{2}}+\gamma\frac{[SI]}{[I]}. \end{dmath*} Replacing all $\frac{[SI]}{[I]}$ terms by $\alpha$ and all $\frac{[II]}{[I]}$ terms by $\delta$ gives \begin{align*} \frac{d\alpha}{dt}&=-(\tau+\gamma)\alpha+\tau\xi\frac{[SS]}{[S]}\alpha\left((1-\phi)+\phi\frac{N}{n[S]}\alpha\right)-\tau\xi\frac{[SI]}{[S]}\alpha\left((1-\phi)+\phi\frac{N}{n[I]}\delta\right)-\tau\alpha^{2}+\gamma\alpha \\ &=-\tau\alpha+\tau\xi\frac{[SS]}{[S]}\alpha\left((1-\phi)+\phi\frac{N}{n[S]}\alpha\right)-\tau\xi\frac{[SI]}{[S]}\alpha\left((1-\phi)+\phi\frac{N}{n[I]}\delta\right)-\tau\alpha^{2} \\ &=-\tau\alpha+\tau\xi\frac{[SS]}{[S]}(1-\phi)\alpha+\tau\xi\phi\frac{N[SS]}{n[S]^{2}}\alpha^{2}-\tau\xi\frac{[SI]}{[S]}(1-\phi)\alpha-\tau\xi\frac{N[SI]}{n[S][I]}\phi\alpha\delta-\tau\alpha^{2} \\ &=-\tau\alpha+\tau\xi\frac{[SS]}{[S]}(1-\phi)\alpha+\tau\xi\phi\frac{N[SS]}{n[S]^{2}}\alpha^{2}-\tau\xi\frac{[SI]}{[S]}(1-\phi)\alpha-\tau\xi\frac{N}{n[S]}\phi\alpha^{2}\delta-\tau\alpha^{2} \end{align*} In section \ref{sec:ep_threshold} we considered an epidemic threshold and a condition for stability of the disease-free steady state. In both cases, we consider the state of the system at time zero. At time zero we assume that $[S]=N$,  $[SS]=nN$, and $[SI]=0$, therefore we substitute these values into the differential equation for $\alpha$ to obtain \begin{align*} \frac{d\alpha}{dt}&=-\tau\alpha+\tau\xi\frac{nN}{N}(1-\phi)\alpha+\tau\xi\phi\frac{NnN}{nN^{2}}\alpha^{2}-\tau\xi\frac{N}{nN}\phi\alpha^{2}\delta-\tau\alpha^{2} \\ &=-\tau\alpha+\tau\xi n(1-\phi)\alpha+\tau\xi\phi\alpha^{2}-\tau\xi\frac{1}{n}\phi\alpha^{2}\delta-\tau\alpha^{2}, \end{align*} where $\xi=\frac{(n-1)}{n}$. Differentiating $\delta=\frac{[II]}{[I]}$ gives \begin{eqnarray*} \frac{d\delta}{dt}&=&\frac{[II]'[I]-[II][I]'}{[I]^{2}} \\ &=&\frac{[II]'}{[I]}-\frac{[II][I]'}{[I]^{2}}, \end{eqnarray*} and substituting $[II]'$ from equation \eqref{eq:pw_clust_simple_closure_end} and $[I]'$ from equation \eqref{eq:pw_clust_simple_closure_2}, we obtain \begin{dmath*} \frac{d\delta}{dt}=2\tau\frac{[SI]}{[I]}-2\gamma\frac{[II]}{[I]}+2\tau\xi\frac{[SI]^{2}}{[S][I]}\left((1-\phi)+\phi\frac{N[II]}{n[I]^{2}}\right)-\tau\frac{[SI][II]}{[I]^{2}}+\gamma\frac{[II]}{[I]}. \end{dmath*} Replacing all $\frac{[SI]}{[I]}$ terms by $\alpha$ and all $\frac{[II]}{[I]}$ terms by $\delta$ gives \begin{align*} \frac{d\delta}{dt}&=2\tau\alpha-2\gamma\delta+2\tau\xi\frac{[SI]}{[S]}\alpha\left((1-\phi)+\phi\frac{N}{n[I]}\delta\right)-\tau\alpha\delta+\gamma\delta \\ &=2\tau\alpha-\gamma\delta+2\tau\xi\frac{[SI]}{[S]}\alpha\left((1-\phi)+\phi\frac{N}{n[I]}\delta\right)-\tau\alpha\delta \\ &=2\tau\alpha-\gamma\delta+2\tau\xi\frac{[SI]}{[S]}(1-\phi)\alpha+2\tau\xi\frac{N[SI]}{n[S][I]}\phi\alpha\delta-\tau\alpha\delta \\ &=2\tau\alpha-\gamma\delta+2\tau\xi\frac{[SI]}{[S]}(1-\phi)\alpha+2\tau\xi\frac{N}{n[S]}\phi\alpha^{2}\delta-\tau\alpha\delta. \end{align*} At time zero we assume that $[S]=N$ and $[SI]=0$. We substitute these values into the differential equation for $\delta$ to obtain \begin{align*} \frac{d\delta}{dt}&=2\tau\alpha-\gamma\delta+2\tau\xi\frac{1}{n}\phi\alpha^{2}\delta-\tau\alpha\delta. \end{align*} Combining the differential equations for both $\alpha=\frac{[SI]}{[I]}$ and $\delta=\frac{[II]}{[I]}$, we have \begin{eqnarray} \frac{d\alpha}{dt}&=&-\tau\alpha+\tau\xi n(1-\phi)\alpha+\tau\xi\phi\alpha^{2}-\tau\xi\frac{1}{n}\phi\alpha^{2}\delta-\tau\alpha^{2}, \\ \frac{d\delta}{dt}&=&2\tau\alpha-\gamma\delta+2\tau\xi\frac{1}{n}\phi\alpha^{2}\delta-\tau\alpha\delta. \end{eqnarray}

\section{Derivation of evolution equations for the fast variables with the compact improved closure}
\label{sec:correlation_struct_and_fast_vars_reduced_improved_closure}

Using the improved closure \eqref{eq:improved_closure} in line with Proposition \ref{prop:inequ_closures}, which we refer to as the reduced improved closure, we find that \begin{align} [ASI]&=(n-1)\left((1-\phi)\frac{[AS][SI]}{n[S]}+\phi\frac{[AS][SI][IA]}{[A]\sum_{a}[aS][aI]/[a]}\right) \\ &=(n-1)\left((1-\phi)\frac{[AS][SI]}{n[S]}+\phi\frac{[AS][SI][IA]}{[A]\left(\frac{[SS][SI]}{[S]}+\frac{[SI][II]}{[I]}\right)}\right). \label{eq:reduced_improved_closure} \end{align} Using equation \eqref{eq:reduced_improved_closure} to close the original pairwise equations \eqref{equations:unclosed_pw_model_one}-\eqref{equations:unclosed_pw_model_end}, we obtain the following system of equations: \begin{eqnarray} \dot{[S]}&=&-\tau[SI] \\ \dot{[I]}&=&\tau[SI]-\gamma[I] \label{eq:dot_I_reduced_imp_close} \end{eqnarray} \begin{dmath} \label{eq:dot_SI_reduced_imp_close} \dot{[SI]}=-(\tau+\gamma)[SI]+\tau(n-1)\left((1-\phi)\frac{[SS][SI]}{n[S]}+\phi\frac{[I][SS][SI]}{[I][SS]+[S][II]}\right)-\tau(n-1)\left((1-\phi)\frac{[SI]^{2}}{n[S]}+\phi\frac{[S][SI][II]}{[I][SS]+[S][II]}\right) \end{dmath} \begin{dmath} \dot{[SS]}=-2\tau(n-1)\left((1-\phi)\frac{[SS][SI]}{n[S]}+\phi\frac{[I][SS][SI]}{[I][SS]+[S][II]}\right) \end{dmath} \begin{dmath} \dot{[II]}=2\tau[SI]-2\gamma[II]+2\tau(n-1)\left((1-\phi)\frac{[SI]^{2}}{n[S]}+\phi\frac{[S][SI][II]}{[I][SS]+[S][II]}\right). \label{eq:dot_II_reduced_imp_close} \end{dmath} 

As we have shown in the main body of the paper, the computation of the threshold requires a system of differential equations for the fast variables $\alpha=[SI]/[I]$ and $\delta=[II]/[I]$. We find \begin{dmath*} \frac{d\alpha}{dt}=\frac{[SI]'}{[I]}-\frac{[SI][I]'}{[I]^{2}} \end{dmath*} and substituting $[SI]'$ from equation \eqref{eq:dot_SI_reduced_imp_close} and $[I]'$ from equation \eqref{eq:dot_I_reduced_imp_close}, we obtain \begin{dmath} \frac{d\alpha}{dt}=-(\tau+\gamma)\frac{[SI]}{[I]}+\tau(n-1)\left((1-\phi)\frac{[SS][SI]}{n[S][I]}+\phi\frac{[SS][SI]}{[I][SS]+[S][II]}\right)-\tau(n-1)\left((1-\phi)\frac{[SI]^{2}}{n[S][I]}+\phi\frac{[S][SI][II]}{[I]^{2}[SS]+[S][I][II]}\right)-\tau\frac{[SI]^{2}}{[I]^{2}}+\gamma\frac{[SI]}{[I]}. \end{dmath} Replacing all $\frac{[SI]}{[I]}$ terms by $\alpha$ and all $\frac{[II]}{[I]}$ terms by $\delta$ gives \begin{dmath} \frac{d\alpha}{dt}=-(\tau+\gamma)\alpha+\tau(n-1)\left((1-\phi)\frac{[SS]}{n[S]}\alpha+\phi\alpha\frac{[SS]}{[SS]+[S]\delta}\right)-\tau(n-1)\left((1-\phi)\frac{[SI]}{n[S]}\alpha+\phi\alpha\delta\frac{[S]}{[SS]+[S]\delta}\right)-\tau\alpha^{2}+\gamma\alpha, \end{dmath} and evaluating $\frac{d\alpha}{dt}$ at the disease-free steady state $([S],[I],[SI],[SS],[II])=(N,0,0,nN,0)$ \eqref{eq:disease_free_st_state} gives \begin{dmath} \frac{d\alpha}{dt}=-(\tau+\gamma)\alpha+\tau(n-1)\left((1-\phi)\alpha+\phi\alpha\frac{nN}{nN+N\delta}\right)-\tau(n-1)\left(\phi\alpha\delta\frac{N}{nN+N\delta}\right)-\tau\alpha^{2}+\gamma\alpha. \end{dmath} After simplification we find that \begin{eqnarray} \frac{d\alpha}{dt}&=&-\tau\alpha+\tau(n-1)\left((1-\phi)\alpha+\phi\alpha\frac{n}{n+\delta}-\phi\alpha\delta\frac{1}{n+\delta}\right)-\tau\alpha^{2} \\ &=&-\tau\alpha+\tau(n-1)\left((1-\phi)\alpha+\phi\alpha\left(\frac{n-\delta}{n+\delta}\right)\right)-\tau\alpha^{2}. \end{eqnarray} Differentiating $\delta=\frac{[II]}{[I]}$ gives \begin{dmath*} \frac{d\delta}{dt}=\frac{[II]'}{[I]}-\frac{[II][I]'}{[I]^{2}}, \end{dmath*} and substituting $[II]'$ from equation \eqref{eq:dot_II_reduced_imp_close} and $[I]'$ from equation \eqref{eq:dot_I_reduced_imp_close}, we obtain \begin{dmath} \frac{d\delta}{dt}=2\tau\frac{[SI]}{[I]}-2\gamma\frac{[II]}{[I]}+2\tau(n-1)\left((1-\phi)\frac{[SI]^{2}}{n[S][I]}+\phi\frac{[S][SI][II]}{[I]^{2}[SS]+[S][I][II]}\right)-\tau\frac{[SI][II]}{[I]^{2}}+\gamma\frac{[II]}{[I]}. \end{dmath} Replacing all $\frac{[SI]}{[I]}$ terms by $\alpha$ and all $\frac{[II]}{[I]}$ terms by $\delta$ gives \begin{eqnarray*} \frac{d\delta}{dt}&=&2\tau\alpha-2\gamma\delta+2\tau(n-1)\left((1-\phi)\frac{[SI]}{n[S]}\alpha+\phi\alpha\delta\frac{[S]}{[SS]+[S]\delta}\right)-\tau\alpha\delta+\gamma\delta \\ &=&2\tau\alpha-\gamma\delta+2\tau(n-1)\left((1-\phi)\frac{[SI]}{n[S]}\alpha+\phi\alpha\delta\frac{[S]}{[SS]+[S]\delta}\right)-\tau\alpha\delta, \end{eqnarray*} and evaluating $\frac{d\delta}{dt}$ at the disease-free steady state \eqref{eq:disease_free_st_state} gives \begin{eqnarray} \frac{d\delta}{dt}&=&2\tau\alpha-\gamma\delta+2\tau(n-1)\left(\phi\alpha\delta\frac{N}{nN+N\delta}\right)-\tau\alpha\delta \\ &=&2\tau\alpha-\gamma\delta+2\tau(n-1)\left(\phi\alpha\delta\frac{1}{n+\delta}\right)-\tau\alpha\delta. \end{eqnarray} Combining the differential equations for both $\alpha=\frac{[SI]}{[I]}$ and $\delta=\frac{[II]}{[I]}$, we have \begin{eqnarray} \frac{d\alpha}{dt}&=&-\tau\alpha+\tau(n-1)\left((1-\phi)\alpha+\phi\alpha\left(\frac{n-\delta}{n+\delta}\right)\right)-\tau\alpha^{2} \\ \frac{d\delta}{dt}&=&2\tau\alpha-\gamma\delta+2\tau(n-1)\left(\frac{\phi\alpha\delta}{n+\delta}\right)-\tau\alpha\delta. \end{eqnarray}

\section{Standard linear-stability analysis for the case of the simple closure}
\label{sec:lin_stab_anal_simp_clo}

An alternative way to determine the epidemic threshold is to consider the  stability of the disease-free steady state \begin{equation} ([S],[I],[SI],[SS],[II])=(N,0,0,nN,0). \label{eq:disease_free_st_state} \end{equation} When the disease-free steady state is stable, the system will always end up at the disease-free steady state and thus no epidemic will occur. When the disease-free steady state becomes unstable, there exists (at least) a second steady state whereby an epidemic will occur and $[S]$ will no longer be equal to $N$. To determine a stability condition for the disease-free steady state \eqref{eq:disease_free_st_state}, we must compute the Jacobian matrix $J$ of the system \eqref{eq:pw_clust_simple_closure_1}-\eqref{eq:pw_clust_simple_closure_end}, evaluated at the disease-free steady state, and solve to find its eigenvalues. 

By computing partial derivatives of each differential equation \eqref{eq:pw_clust_simple_closure_1}-\eqref{eq:pw_clust_simple_closure_end} with respect to each model variable $[S]$, $[I]$, $[SI]$, $[SS]$ and $[II]$, and evaluating each expression at the disease-free steady state \eqref{eq:disease_free_st_state}, we obtain \begin{equation} \label{eq:jacobian_disease_free_st_st} J_{df}=\begin{pmatrix} 0 & 0 & -\tau & 0 & 0 \\ 0 & -\gamma & \tau & 0 & 0 \\ 0 & \frac{\partial\dot{[SI]}}{\partial[I]} & \frac{\partial\dot{[SI]}}{\partial[SI]} & 0 & \frac{\partial\dot{[SI]}}{\partial[II]} \\ 0 & \frac{\partial\dot{[SS]}}{\partial[I]} & \frac{\partial\dot{[SS]}}{\partial[SI]} & 0 & 0 \\ 0 & \frac{\partial\dot{[II]}}{\partial[I]} & \frac{\partial\dot{[II]}}{\partial[SI]} & 0 & \frac{\partial\dot{[II]}}{\partial[II]} \end{pmatrix}, \end{equation} with $\frac{\partial\dot{[SI]}}{\partial[I]}=\tau\xi\phi\left(\frac{2[SI]^{2}[II]}{n[I]^{3}}-\frac{[SI]^{2}}{[I]^{2}}\right)$, $\frac{\partial\dot{[SI]}}{\partial[SI]}=-(\tau+\gamma)+\tau\xi(1-\phi)n+2\tau\xi\phi\left(\frac{[SI]}{[I]}-\frac{[SI][II]}{n[I]^{2}}\right)$, $\frac{\partial\dot{[SI]}}{\partial[II]}=-\tau\xi\phi\frac{[SI]^{2}}{n[I]^{2}}$, $\frac{\partial\dot{[SS]}}{\partial[I]}=2\tau\xi\phi\frac{[SI]^{2}}{[I]^{2}}$, $\frac{\partial\dot{[SS]}}{\partial[SI]}=-2\tau\xi(1-\phi)n-4\tau\xi\phi\frac{[SI]}{[I]}$, $\frac{\partial\dot{[II]}}{\partial[I]}=-4\tau\xi\phi\frac{[SI]^{2}[II]}{n[I]^{3}}$, $\frac{\partial\dot{[II]}}{\partial[SI]}=2\tau+4\tau\xi\phi\frac{[SI][II]}{n[I]^{2}}$ and $\frac{\partial\dot{[II]}}{\partial[II]}=-2\gamma+2\tau\xi\phi\frac{[SI]^{2}}{n[I]^{2}}$ all containing variables $\frac{[SI]}{[I]}$ and $\frac{[II]}{[I]}$. The zero entries in $J_{df}$ reflect the true values that the respective partial derivatives attain at the disease-free equilibrium. However, the majority of the non-zero matrix entries involve $\frac{[SI]}{[I]}$ and $\frac{[II]}{[I]}$. Since $[I]=[SI]=[II]=0$ at the disease-free steady state, both of these quantities are ill-defined. Hence, not all entries of the Jacobian can be evaluated at the equilibrium. This issue prevents the computation of the eigenvalues of $J_{df}$ and thus the value of the epidemic threshold. In order to progress, we need to determine the correct values for  $\alpha=\frac{[SI]}{[I]}$ and $\delta=\frac{[II]}{[I]}$. We note that the correct value of $\alpha=\frac{[SI]}{[I]}$ is also required in equation~\eqref{eq:new_threshold}, and the threshold cannot be computed without it.

In fact, using only $\phi=0$, the Jacobian at the disease-free steady state \eqref{eq:disease_free_st_state} becomes \begin{equation} J_{df\_no\_clust}=\begin{pmatrix} 0 & 0 & -\tau & 0 & 0 \\ 0 & -\gamma & \tau & 0 & 0 \\ 0 & 0 & -\gamma+\tau(n-2) & 0 & 0 \\ 0 & 0 & -2\tau(n-1) & 0 & 0 \\ 0 & 0 & 2\tau & 0 & -2\gamma \end{pmatrix}. \end{equation} 
It is straightforward to show that the eigenvalues are given by $\lambda_{1}=0$, $\lambda_{2}=-\gamma$, $\lambda_{3}=\tau(n-2)-\gamma$, $\lambda_{4}=0$ and $\lambda_{5}=-2\gamma$. The only eigenvalue that can be non-zero and non-negative is $\lambda_{3}=\tau(n-2)-\gamma$. Hence, we know that the disease-free steady state \eqref{eq:disease_free_st_state} is stable when $\lambda_{3}\leq 0$ and becomes unstable when $\lambda_{3}>0$. Thus, the epidemic threshold is given by $\lambda_{3}=0$ and this can be rearranged to give $\tau (n-2)/\gamma=1$. This is equivalent to the calculation based on determining the quasi-equilibrium of the fast variables. 

\section{Standard linear-stability analysis for the case of the compact improved closure}
\label{sec:lin_stab_anal_comp_imp_clo}

To determine an epidemic threshold, we consider conditions for stability of the disease-free steady state \eqref{eq:disease_free_st_state}. To do so, we compute the Jacobian matrix evaluated at the disease-free steady state as \begin{equation} J_{df_2}=\begin{pmatrix} 0 & 0 & -\tau & 0 & 0 \\ 0 & -\gamma & \tau & 0 & 0 \\ 0 & \frac{\partial\dot{[SI]}}{\partial[I]} & \frac{\partial\dot{[SI]}}{\partial[SI]} & 0 & \frac{\partial\dot{[SI]}}{\partial[II]} \\ 0 & \frac{\partial\dot{[SS]}}{\partial[I]} & \frac{\partial\dot{[SS]}}{\partial[SI]} & 0 & \frac{\partial\dot{[SS]}}{\partial[II]} \\ 0 & \frac{\partial\dot{[II]}}{\partial[I]} & \frac{\partial\dot{[II]}}{\partial[SI]} & 0 & \frac{\partial\dot{[II]}}{\partial[II]} \end{pmatrix} \end{equation} where $\frac{\partial\dot{[SI]}}{\partial[I]}=2\tau(n-1)\phi\alpha\delta\frac{n}{n^{2}+2n\delta+\delta^{2}}$, $\frac{\partial\dot{[SI]}}{\partial[SI]}=-(\tau+\gamma)+\tau(n-1)\left((1-\phi)+\phi\left(\frac{n-\delta}{n+\delta}\right)\right)$, $\frac{\partial\dot{[SI]}}{\partial[II]}=-2\tau(n-1)\left(\phi\alpha\frac{n}{n^{2}+2n\delta+\delta^{2}}\right)$, $\frac{\partial\dot{[SS]}}{\partial[I]}=-2\tau(n-1)\left(\phi\alpha\delta\frac{n}{n^{2}+2n\delta+\delta^{2}}\right)$, $\frac{\partial\dot{[SS]}}{\partial[SI]}=-2\tau(n-1)\left((1-\phi)+\phi\frac{n}{n+\delta}\right)$, $\frac{\partial\dot{[SS]}}{\partial[II]}=2\tau(n-1)\left(\phi\alpha\frac{n}{n^{2}+2n\delta+\delta^{2}}\right)$, $\frac{\partial\dot{[II]}}{\partial[I]}=-2\tau(n-1)\left(\phi\alpha\delta\frac{n}{n^{2}+2n\delta+\delta^{2}}\right)$, $\frac{\partial\dot{[II]}}{\partial[SI]}=2\tau+2\tau(n-1)\left(\phi\delta\frac{1}{n+\delta}\right)$ and $\frac{\partial\dot{[II]}}{\partial[II]}=-2\gamma+2\tau(n-1)\left(\phi\alpha\frac{n}{n^{2}+2n\delta+\delta^{2}}\right)$ cannot be fully evaluated as they contain products of the problematic variables $\alpha=\frac{[SI]}{[I]}$ and $\delta=\frac{[II]}{[I]}$. 

The Jacobian above becomes useful once  analytical expressions for $\alpha$ and $\delta$ are obtained (or it could be an asymptotic expansion or even numerical values). Plugging these in the Jacobian will allow to either numerically or analytically compute the threshold. We note that using the linear-stability analysis or focusing on the initial growth rate should lead to the same threshold value, as was already shown for the for the case of the system with the simple closure in Section~\ref{sec:lin_stab_anal_simp_clo}.







\bibliography{sample}
\bibliographystyle{plain}

\end{document}